\newtheorem{theorem}{Theorem}[]
\newtheorem{lemma}[]{Lemma}
\newtheorem{claim}[]{Claim}
\newtheorem{definition}{Definition}[]
\newtheorem{proposition}{Proposition}
\newcommand\blfootnote[1]{%
  \begingroup
  \renewcommand\thefootnote{}\footnote{#1}%
  \addtocounter{footnote}{-1}%
  \endgroup
}
\newcounter{relctr} 
\everydisplay\expandafter{\the\everydisplay\setcounter{relctr}{0}} 
\newcommand\labelrel[2]{%
  \begingroup
    \refstepcounter{relctr}%
    \stackrel{\textnormal{(\alph{relctr})}}{\mathstrut{#1}}%
    \originallabel{#2}%
  \endgroup
}
\def\BibTeX{{\rm B\kern-.05em{\sc i\kern-.025em b}\kern-.08em
    T\kern-.1667em\lower.7ex\hbox{E}\kern-.125emX}}
\begin{document}

\title{Finite Time Bounds for Stochastic Bounded Confidence Dynamics\\
}

\author{%
  \IEEEauthorblockN{Sushmitha Shree S, Avhishek Chatterjee, Krishna Jagannathan}\\
  \IEEEauthorblockA{Department of Electrical Engineering\\  Indian Institute of Technology Madras, Chennai 600036, India\\
                    \{ee18d702\}@smail.iitm.ac.in, 
                    \{avhishek, krishnaj\}@ee.iitm.ac.in 
                    }
                    }

\maketitle
\blfootnote{A preliminary version of this paper appeared in the proceedings of COMmunication Systems \& NETworkS (COMSNETS) 2022 \cite{Shree2022}.}
\begin{abstract}
In this era of fast and large-scale opinion formation, a mathematical understanding of opinion evolution, a.k.a. opinion dynamics, acquires importance. Linear graph-based dynamics and bounded confidence dynamics are the two popular models for opinion dynamics in social networks. Stochastic bounded confidence (SBC) opinion dynamics was proposed as a general framework that incorporates both these dynamics as special cases and also captures the inherent stochasticity and noise (errors) in real-life social exchanges. Although SBC dynamics is quite general and realistic, its analysis is  more challenging. This is because SBC dynamics is nonlinear and stochastic, and belongs to the class of Markov processes that have asymptotically zero drift and unbounded jumps. The asymptotic behavior of SBC dynamics was characterized in prior works. However, they do not shed light on its finite-time behavior, which is often of interest in practice. We take a stride in this direction by analyzing the finite-time behavior of a two-agent system and a bistar graph, which are crucial to the understanding of general multi-agent dynamics. In particular, we show that the opinion difference between the two agents is well-concentrated around zero under the conditions that lead to asymptotic stability of the SBC dynamics.
\end{abstract}

\begin{IEEEkeywords}
Opinion dynamics; Markov process; Concentration inequality.

\end{IEEEkeywords}

\section{Introduction}
\label{sec:intro}
Public opinion is the driving force of a society. The advent of social media platforms has revolutionized the speed and scale of opinion formation, resulting in significant effects on societies. Hence, modeling opinion formation, popularly known as opinion dynamics, and analyzing its behavior is a very important problem.

The study of opinion dynamics has a long history \cite{Noorazar2020}. In the mathematical and computational study of opinion dynamics, individuals or social entities, a.k.a. agents, are modeled to have real-valued opinions regarding a topic. A positive (negative) opinion represents a favorable (unfavorable) view of the topic, and its magnitude represents the agent's conviction. Opinion dynamics models are discrete-time dynamical systems where the opinions of the agents at the next time slot are updated according to a specified function of the current opinions.

\subsection{Background and Motivation}
Broadly, there have been two popular models of opinion dynamics: linear graph-based dynamics and bounded confidence dynamics.  In the first model \cite{French1956, ABELSON1967, DeGroot1974, Friedkin1999, Fagnani2007, Acemoglu2013, Salehi2010, Yildiz2013}, opinion updates occur according to a linear combination of opinions of neighbors on a social graph.  In the original bounded confidence dynamics \cite{Axelrod1997,Deffuant2000,Hegselmann2002}, an agent updates its opinion using the average of the opinion of all agents (including itself) whose opinions are within a specified distance from its own opinion. Thus, in short, linear dynamics considers social graph-based opinion exchanges, whereas bounded confidence dynamics considers opinion-dependent opinion exchanges. 

Although the study of these two models and their variations \cite{French1956, ABELSON1967, Axelrod1997, Friedkin1999, Hegselmann2002, Fagnani2007, Acemoglu2013, Salehi2010, Yildiz2013, Noorazar2020} span most of the literature, they capture only partially, the nuances of real-world social interactions. An important characteristic of interactions in the real world is that they are often stochastic.  Since no agent can know the true opinion of the other agent and can, at best, estimate it based on expressed views.  Hence, whether an agent accepts or rejects another agent's opinion is a random phenomenon.  Naturally, the probability of accepting another agent's opinion is lower when the difference between their (true) opinions is larger.  Moreover, while updating its own opinion based on other accepted opinions, an agent incorporates {estimates} of others' opinions tempered by its own inherent beliefs.  This leads to unavoidable {\em noise} in the opinion updates. 

Stochastic bounded confidence (SBC) opinion dynamics proposed in \cite{Baccelli2015, Baccelli-Infocom, Baccelli2017} is a framework that addresses these real-world issues while merging the graph-based exchanges in linear dynamics with a stochastic generalization of the opinion-dependent exchanges in bounded confidence dynamics.  Unlike prior opinion dynamics models, where opinions eventually converge, the SBC dynamics captures real-life scenarios where opinions in a society, depending on the scenario, may stay close or diverge to opposite extremes.

Linear dynamics and bounded confidence dynamics have been analyzed in detail in the literature.  SBC dynamics is a stochastic generalization of both these dynamics on a graph.  Though the nonlinear and stochastic nature of SBC dynamics makes this model much more realistic, its analysis becomes significantly more challenging.  In \cite{Baccelli2015, Baccelli-Infocom, Baccelli2017}, specific conditions involving the social graph and the nature of the stochastic opinion-dependent exchanges were provided for limiting opinion differences to be finite.  In multiple settings, tight converse results were also provided.  Although these results display significant initial progress, they do not disclose anything about the evolution of opinions over a finite time window, which is often of interest in practice. 

\subsection{Our Contributions}
In this paper, we take the first stride towards a non-asymptotic characterization of the opinion differences under SBC dynamics. Here, we obtain high probability bounds on opinion differences at a finite time for stable SBC dynamics. The primary focus of this work is on characterizing the evolution of the opinion difference for \emph{two-agent} SBC dynamics at a finite time. We derive high probability bounds for the opinion difference under sub-Gaussian noise using a Chernoff bound. In particular, we demonstrate that the opinion difference between the agents is well-concentrated around zero under the same technical conditions that imply the asymptotic stability of the SBC dynamics.  

Next, motivated by a two-party democratic polity, we study SBC dynamics on a bistar social network\footnote{A bistar graph is a reasonable model for a polity with two leaders/parties having large followings.}. We obtain bounds for a class of asymptotically stable SBC dynamics on bistar social graphs using the insights gathered from the two-agent dynamics.  

The reason for starting with the two-agent case is to study the issues of nonlinearity and stochasticity in isolation from the graph structure. Also, the two-agent SBC dynamics is a necessary building block for understanding the dynamics on a general graph. The insights obtained from the two-agent dynamics help us to develop bounds for SBC dynamics on bistar social graphs. Towards the end, we support our theoretical analysis using numerical results.


\subsection{Organization}

In the following section (Sec.~\ref{sec:SBC}), we briefly discuss SBC dynamics. The two-agent stable SBC dynamics is analyzed in Sec.~\ref{sec:mainResult}. In this section, we present a high probability bound on the opinion difference at a finite time when the noise (errors) in opinion exchange has a sub-Gaussian distribution. In Sec.~\ref{sec:proofOutline}, we outline the proof of the main result of two-agent dynamics, starting with the relatively simpler case of bounded noise in Sec.~\ref{sec:boundedN}, followed by its extension to sub-Gaussian noise in Sec.~\ref{sec:subG}. We then present the high probability bounds for opinion differences in stable SBC dynamics on a bistar graph (Sec.~\ref{section:network}), followed by the numerical results in Sec.~\ref{simulations}. Detailed proofs are in Appendix.

\section{Stochastic bounded confidence dynamics}
\label{sec:SBC}

Stochastic bounded confidence (SBC) opinion dynamics \cite{Baccelli2017} captures the effect of the social graph as well as that of the closeness of opinions on opinion evolution. Furthermore, it models the impact of inherent stochasticity in human interactions and the unavoidable noise and errors in opinion exchanges. It is a general framework for opinion dynamics that captures the well-known linear dynamics and the bounded confidence dynamics as special cases. Below, we briefly describe a simplified version of the dynamics that is sufficient for the purpose of this work. Please see \cite{Baccelli2017} for a more general description. 

In SBC dynamics, there is an underlying undirected social network $\mathcal{G}=(\mathcal{V}, \mathcal{E})$ of $n$ agents, which captures the possible interactions. Agent $u$ can interact with $v$ only if it has an edge $(u,v)$ with $v$. For each edge $(u,v) \in \mathcal{E}$, there are two influence functions $G_{u,v}, G_{v,u}:[0,\infty) \to [0,1]$, which capture the probability of influence of $u$ on $v$ and that of $v$ on $u$, respectively, as functions of their opinion difference.

Opinions are real-valued and evolve as discrete-time stochastic dynamics. Opinions at time $t$ are denoted by $\{X_u(t): u \in \mathcal{V}\}$. Any two agents $u$ and $v$ with $(u,v) \in \mathcal{E}$ interact at time $t$ with a non-zero probability, and at any time, an agent interacts with at most one other agent. If $u$ and $v$ interact at time $t$, then agent $u$ influences $v$ and vice versa with probabilities $G_{u,v}(|X_u(t)-X_v(t)|)$ and $G_{v,u}(|X_u(t)-X_v(t)|)$, respectively. 

If $u$ influences $v$ at time $t$, then $v$ updates its opinion as
\begin{align*} 
X_v(t+1) &= \frac{X_u(t)+X_v(t)}{2} + n_v(t), \nonumber
\end{align*}
and if $v$ is not influenced by any agent, then its opinion evolves as
\begin{align*} 
X_v(t+1) &= X_v(t) + n_v(t). \nonumber
\end{align*}
Here, for each agent $u$, $n_u(t)$ is an i.i.d. zero mean process. This captures the errors and noise in the interactions, which stem from miscommunications and misinterpretations. This also captures the innate evolution of the opinion of an agent due to his/her own thoughts and emotions. 

Note that if one chooses $G_{u,v}(x)=G_{v,u}(x)=1$ for all $x$ and $(u,v) \in \mathcal{E}$ and $n_u(t)=0$, we get back the well-known linear dynamics on a social network. On the other hand, if we choose $\mathcal{G}$ to be a clique and for any $u,v$, choose $G_{u,v}(x)=G_{v,u}(x)=1$ for $x\le d$  and $G_{u,v}(x)=G_{v,u}(x)=0$ for all $x>d$, we get back the well-known pairwise bounded confidence opinion dynamics. Thus, these two popular classes of dynamics are special cases of SBC dynamics.

Due to nonlinearity and stochasticity,  analyzing SBC is significantly more challenging. Further, due to the presence of noise (or estimation error) in the SBC dynamics, a consensus cannot be reached, not even in an almost sure or a high probability sense. This, in a way, reflects the real social dynamics, where there may not be a consensus. In such a scenario, just like in real democratic societies, we can, at best, hope for the differences of opinions to remain finite. To capture this scenario, the notion of stability of opinion dynamics was introduced in \cite{Baccelli2017}, and conditions for the stability of SBC dynamics were established. 

Mathematically, the stability of SBC dynamics is defined as the opinion differences between agents asymptotically reaching a proper stationary distribution. On the other, the dynamics is said to be not stable if the opinion differences do not reach a stationary distribution. These capture the cases when the opinions of two groups stay close and diverge away, respectively. The stability results (and their converses) in \cite{Baccelli2017} provide simple conditions in terms of $\mathcal{G}$ and $\{G_{u,v}\}$ that lead to stability (and otherwise). 

The stability results are essential in understanding the dynamics. Still, due to their asymptotic nature, they do not shed much light on opinion differences at a finite time, which are often of practical interest. However, standard concentration inequalities cannot be applied to obtain bounds for opinion differences in stable SBC dynamics due to their jumpy, nonlinear behavior.

\section{Bounds for Two-agent Dynamics}
\label{sec:mainResult}
For developing insights into the general dynamics, we first consider the following simple two-agent symmetric dynamics. Agents $1$ and $2$ interact at all time instants. Thus, at time $t$, if their opinions are $X_1(t)$ and $X_2(t)$, then they are mutually influenced by each other with probability $G(|X_1(t)-X_2(t)|)$. Upon influence, their opinions are updated as, for $i=1,2$,
\[
X_i(t+1) = \frac{X_1(t)+X_2(t)}{2} + n_i(t)\]
and when they are not influenced, the opinions evolve as, for $i=1,2$,
\begin{align*} 
X_i(t+1) &= X_i(t) + n_i(t). \nonumber
\end{align*}

Their opinion difference $Y(t):=X_1(t)-X_2(t)$ is a discrete-time stochastic process and its evolution can be written as: given $Y(t)=y$, $Y(t+1)=\tilde{n}(t)$ with probability $G(|y|)$ and with probability $1-G(|y|)$,
\[Y(t+1) = Y(t) + \tilde{n}(t),\]
where $\tilde{n}(t)=n_1(t)-n_2(t)$ is the difference of two independent zero mean i.i.d. noise processes. We assume that $\tilde{n}(t)$ has a symmetric distribution about its mean.

It was shown in \cite{Baccelli2017} that the opinion difference $Y(t)$ reaches a stationary distribution, i.e., the SBC dynamics is stable, if for some $\delta>0$, $G(x) \gtrsim \frac{1}{x^{2-\delta}}$, where the notation $m(x)\gtrsim g(x)$ means $\liminf_{x\to\infty} \frac{m(x)}{g(x)}>0$. It was also shown that the dynamics is not stable if for some $\delta>0$, $\frac{1}{x^{2+\delta}} \gtrsim G(x)$. The influence function $G(.)$ is monotonically non-increasing. We define $G_0:=G(0)$.

It is not hard to see that if $Y(0)=0$, for most two-agent opinion dynamics, including the SBC dynamics, $Y(t)$ can be written as a function $f_t(\{\tilde{n}(\tau), U_{\tau}: 1 \le \tau \le t\})$, where $U_i$ are i.i.d. uniform $[0,1]$ random variables. Moreover, if $Y(0)=0$, $\mathbb{E}[Y(t)]=\mathbb{E}[f_t(\{\tilde{n}(\tau), U_{\tau}: 1 \le \tau \le t\})]=0$. Hence, it may seem that an application of McDiarmid like inequalities should give a tight high probability bound on $|Y(t)|$. However, for SBC dynamics, $f_t$ has unbounded discontinuities, and hence, they require different treatments. For the same reason, the well-known Markov concentration results \cite{Kavitha} cannot be applied either.

In this paper, our first important result is a bound on $|Y(t)|$ in stable two-agent dynamics at a finite $t$. Later, we build on these intuitions and proof techniques to obtain bounds for dynamics on the bistar graph. We establish the bound for the class of sub-Gaussian i.i.d. noise processes.

\begin{definition}[{Sub-Gaussian Random Variable \cite[Sec. 2.3]{lugosi}}]

A random variable $X$ with $\mathbb{E}[X]=0$ is sub-Gaussian with variance parameter $\sigma^2$, denoted by $X \in \mathcal{SG}(\sigma^2)$, if $\; \forall \lambda \in \mathbb{R}$,
\begin{align*}
    \mathbb{E}[\exp{(\lambda X)}] \leq \exp{\Big(\frac{\lambda^2 \sigma^2}{2}\Big)}.
\end{align*} 
\end{definition}

The main result in the case of two-agent dynamics is a high probability bound on $|Y(t)|$ at any finite time $t$.

\begin{theorem}\label{theorem_sg}
Consider a two-agent stable dynamics with $G(x) \gtrsim \frac{1}{x^{2-\delta}}$ for some $\delta>0$, and i.i.d.  $\tilde{n}(t)\in\mathcal{SG}(\sigma^2)$ for some finite $\sigma$. Let $k=c~t^{\frac{1}{2}-\beta}$ for $c, \beta>0$. Then, with $d_{\tau}=D~\tau^{\frac{1}{2}+\beta'}$ for some $\beta'>0$, $D>0$, and $c'>0$, and $h(t)<t$,
\begin{align*}
    \mathbb{P}_0(|Y(t)| \geq k)&\leq 2(t-h(t))\exp{(-c'h(t)^{ 2\beta'})}+2\Big[\frac{t-h(t)}{1-G(d_t)}\nonumber\\ &+\exp{\Big(G(d_t)h(t)\Big)}\Big] \exp{\Big(- \frac{\sqrt{2 G(d_t)}}{\sigma}k\Big)}
\end{align*}
for all $t\ge 0$. Here, $\mathbb{P}_0(\cdot)$ corresponds to the conditional probability given that the initial difference $Y(0)=0$.
\end{theorem}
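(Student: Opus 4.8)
The plan is to exploit the renewal-like structure created by the resets. Conditioned on $Y(0)=0$, let $L$ denote the time of the \emph{last} reset before $t$ (with $L=0$ if no reset has occurred, consistent with the initial condition). Between the last reset and $t$ the process evolves as a pure random walk, so on $\{L=t-m\}$ we have $Y(t)=\sum_{s=t-m}^{t-1}\tilde{n}(s)$, a sum of $m$ i.i.d. $\mathcal{SG}(\sigma^2)$ increments. The excursion age $m=t-L$ is the key random quantity: a large $|Y(t)|$ can arise either because the current excursion is long (so the walk has had many steps to wander) or because a short excursion happens to produce a large deviation. First I would split on the age through the free parameter $h(t)$: excursions with $m\le h(t)$ (\emph{recent} resets) versus $m>h(t)$ (\emph{old} resets), the latter requiring the walk to survive without a reset for more than $h(t)$ steps.

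The engine throughout is a Chernoff bound applied with the \emph{fixed} multiplier $\lambda=\sqrt{2G(d_t)}/\sigma$ rather than the variance-optimal one; using $\mathbb{E}[e^{\lambda\tilde{n}}]\le e^{\lambda^2\sigma^2/2}$ this choice gives, for a length-$m$ excursion,
\[
\mathbb{P}\Big(\sum_{s}\tilde{n}(s)\ge k\Big)\le e^{G(d_t)m}\,e^{-\frac{\sqrt{2G(d_t)}}{\sigma}k},
\]
which is exactly the linear-in-$k$ decay appearing in the statement, at the price of the length-dependent prefactor $e^{G(d_t)m}$. Taming this prefactor is the heart of the argument. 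Here I would introduce the growing envelope $d_\tau=D\tau^{1/2+\beta'}$ and use monotonicity of $G$: as long as the excursion stays within the envelope, i.e. $|Y(s)|\le d_\tau$ at local time $\tau$, the per-step reset probability is bounded below by $G(d_\tau)\ge G(d_t)$, so the probability of surviving $m-1$ steps without a reset is at most $(1-G(d_t))^{m-1}$. Multiplying this survival factor by the Chernoff prefactor yields $(1-G(d_t))^{m-1}e^{G(d_t)m}\le \frac{1}{1-G(d_t)}$, since $1-x\le e^{-x}$ gives $(1-x)e^{x}\le 1$; the excursion-length growth cancels.

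With these pieces the three regimes map onto the three summands. For old, envelope-confined excursions each age $m\in(h(t),t]$ contributes at most $\frac{1}{1-G(d_t)}e^{-\sqrt{2G(d_t)}k/\sigma}$, and summing over the at most $t-h(t)$ admissible ages gives the $\frac{t-h(t)}{1-G(d_t)}$ factor of the second term. Old excursions that \emph{leave} the envelope force the random walk above $d_\tau$; this is a sub-Gaussian large-deviation event, and since $d_{h(t)}^2/(2h(t)\sigma^2)=\frac{D^2}{2\sigma^2}h(t)^{2\beta'}$, a union bound over the at most $t-h(t)$ possible excursion start times produces the first term with $c'=D^2/(2\sigma^2)$. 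Finally, for recent excursions the age is at most $h(t)$, so the Chernoff prefactor is controlled uniformly by $e^{G(d_t)h(t)}$, giving the remaining summand; the factor $2$ throughout comes from treating $|Y(t)|$ via its two symmetric tails.

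The main obstacle is that the reset events and the noise are \emph{not} independent: whether step $s$ is a reset depends, through $G(|Y(s)|)$, on the very increments that make up $Y$. I would decouple them by realizing the resets through auxiliary i.i.d. uniforms $U_s$ (reset iff $U_s\le G(|Y(s)|)$), so that conditioning on the noise path turns the survival probability into the deterministic product $\prod_s(1-G(|Y(s)|))$, which the envelope then bounds from above. The delicate points are (i) ensuring the bookkeeping over ages collapses to the clean coefficients above without spurious length factors --- this is precisely what the cancellation $(1-G(d_t))^{m-1}e^{G(d_t)m}\le(1-G(d_t))^{-1}$ buys --- and (ii) choosing the envelope exponent $\beta'$ compatibly with the stability exponent $\delta$, so that $G(d_t)$ is large enough to make survival unlikely while $d_t$ is large enough to make envelope crossings rare. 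The bounded-noise warm-up in the referenced section is where this balance is cleanest, and I would establish it there before lifting it to the sub-Gaussian case.
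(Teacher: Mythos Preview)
Your renewal/last-reset decomposition is a genuinely different route from the paper's, and you correctly identify the cancellation $(1-G(d_t))^{m-1}e^{G(d_t)m}\le (1-G(d_t))^{-1}$ that the paper obtains implicitly through its choice of $\lambda$. However, two of your three pieces do not close as stated.

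\textbf{The envelope-failure term is the real gap.} You index the envelope by \emph{local} excursion time, so ``envelope fails'' means $\exists\, j<m:|S_j|>d_j$ for the fresh walk $S_j$. But then $\mathbb{P}(\exists\, j:|S_j|>d_j)\le \sum_{j\ge 1}2\exp(-c'j^{2\beta'})$ is a \emph{constant}; it does not carry the factor $\exp(-c'h(t)^{2\beta'})$. Your sentence ``since $d_{h(t)}^2/(2h(t)\sigma^2)=\tfrac{D^2}{2\sigma^2}h(t)^{2\beta'}$, a union bound over start times gives the first term'' only accounts for a crossing at the single local time $h(t)$, whereas your survival bound $(1-G(d_t))^{m-1}$ needs the envelope to hold at \emph{every} local step. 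Summing your envelope-failure bound over ages $m>h(t)$ yields $(t-h(t))\cdot O(1)$, not $2(t-h(t))\exp(-c'h(t)^{2\beta'})$. The paper avoids this by taking the envelope at \emph{global} time: it conditions on $A_{t-1}=\bigcap_{\tau=h(t)}^{t-1}\{|Y(\tau)|\le d_\tau\}$ and uses the stochastic domination $|Y(\tau)|\le_{st}\big|\sum_{s<\tau}\tilde n(s)\big|$ (the walk from $0$, not from the last reset). Since $\tau\ge h(t)$ in the global index, each failure probability is at most $2\exp(-c'\tau^{2\beta'})\le 2\exp(-c'h(t)^{2\beta'})$, and a union bound over $\tau\in[h(t),t-1]$ gives exactly the first summand.

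\textbf{The recent-reset term also needs more.} On $\{M\le h(t)\}$ you only know $|Y(t)|=|S_M|\le \max_{m\le h(t)}|S_m|$, with $M$ random and correlated with the noise; a Chernoff bound on a fixed $S_m$ does not by itself give $2e^{G(d_t)h(t)}e^{-\lambda k}$. You would need a Doob-type maximal inequality for the backward-sum martingale, which you do not mention. The paper sidesteps this entirely: it never decomposes by age, but instead runs a one-step MGF recursion
\[
\mathbb{E}_0\!\big[e^{\lambda Y(t+1)}\,\big|\,A_t\big]\;\le\; M_{\tilde n}(\lambda)\Big((1-G(d_t))\,\mathbb{E}_0\!\big[e^{\lambda Y(t)}\,\big|\,A_{t-1}\big]+1\Big),
\]
using a symmetry lemma ($\mathbb{E}_0[e^{\lambda Y(t)}|A_t]\le \mathbb{E}_0[e^{\lambda Y(t)}|A_{t-1}]$, since $A_t$ is a symmetric restriction of a symmetric law) to step the conditioning down. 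Unrolling this recursion from $t$ back to $h(t)$, bounding the initial term $\mathbb{E}_0[e^{\lambda Y(h(t))}]$ by the pure-walk MGF $e^{\lambda^2\sigma^2 h(t)/2}$ via stochastic domination, and then taking $\lambda=\sqrt{2G(d_t)}/\sigma$ produces the second and third summands directly.

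In short: your survival-versus-prefactor cancellation is the right mechanism, but the proof in the paper is organised around a global envelope and an MGF recursion rather than a renewal split, and your local envelope cannot deliver the $\exp(-c'h(t)^{2\beta'})$ decay required by the first term.
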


The bound in Theorem~\ref{theorem_sg} can be simplified when $t$ is sufficiently large.

\begin{proposition}
\label{prop:mainresult}

Consider a two-agent stochastic bounded confidence dynamics with $G(x) \gtrsim \frac{1}{x^{2-\delta}}$ for some $\delta>0$ and i.i.d.  $\tilde{n}(t)\in\mathcal{SG}(\sigma^2)$ for some finite $\sigma$. Let $k=c~t^{\frac{1}{2}-\beta}$ for $c, \beta>0$ and $d_t=D~t^{\frac{1}{2}+\beta'}$ for some $\beta'>0$ and $D>0$. Then, for all $t> 0$ such that $\sqrt{t^{1-2\beta-2\epsilon}G(d_t)}\ge \frac{\theta\sigma}{\sqrt{2}c}$, \[\mathbb{P}_0(|Y(t)| \ge k) \le c_1 t \exp{(-\theta t^\epsilon)}\]
for $\epsilon\le\frac{\delta}{6}-\frac{2\beta}{3}$ and positive (independent of $t$) constants $\theta$ and $c_1\le \frac{2(3-2G_0)}{1-G_0}$.

Furthermore, for $t>\Big(\frac{1}{\theta~\epsilon}\Big)^\frac{1}{\epsilon}$ and $c_2>0 $, 
\[\mathbb{P}_0(\lvert Y(t) \rvert \geq k)\leq c_1 \exp{(-c_2 t^\epsilon)}.\]

\end{proposition}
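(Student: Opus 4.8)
The plan is to derive Proposition~\ref{prop:mainresult} directly from Theorem~\ref{theorem_sg} by choosing the free parameter $h(t)$ together with the exponent $\beta'$, and then simplifying the polynomial prefactor. First I would record the equivalence behind the hypothesis: substituting $k=ct^{1/2-\beta}$, the condition $\sqrt{t^{1-2\beta-2\epsilon}G(d_t)}\ge\frac{\theta\sigma}{\sqrt2 c}$ is exactly $\frac{\sqrt{2G(d_t)}}{\sigma}k\ge\theta t^\epsilon$, so the dominant exponential factor of Theorem~\ref{theorem_sg} obeys $\exp(-\frac{\sqrt{2G(d_t)}}{\sigma}k)\le\exp(-\theta t^\epsilon)$. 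It then remains to choose $h(t)$ so that (i) the first summand $2(t-h(t))\exp(-c'h(t)^{2\beta'})$ also decays like $\exp(-\theta t^\epsilon)$, and (ii) the bracketed prefactor $\frac{t-h(t)}{1-G(d_t)}+\exp(G(d_t)h(t))$ stays at most linear in $t$.

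Using the stability hypothesis $G(x)\gtrsim x^{-(2-\delta)}$ and $d_t=Dt^{1/2+\beta'}$ gives $G(d_t)\gtrsim t^{-(1-\delta/2+\beta'(2-\delta))}$. The natural choice is $h(t)\asymp 1/G(d_t)\asymp t^{\,1-\delta/2+\beta'(2-\delta)}$, which makes $G(d_t)h(t)$ a bounded constant so that $\exp(G(d_t)h(t))=O(1)$, and one checks $h(t)<t$ because $\beta'(2-\delta)<\delta/2$ in the regime of interest. With this $h(t)$, requirement (i) reads $c'h(t)^{2\beta'}\ge\theta t^\epsilon$, i.e.\ (to leading order) $\epsilon\le\beta'(2-\delta)$, while the hypothesis is eventually met when the exponent of $t^{1-2\beta-2\epsilon}G(d_t)$ is nonnegative, i.e.\ $\epsilon\le\frac{\delta}{4}-\beta-\frac{\beta'(2-\delta)}{2}$. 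Writing $x=\beta'(2-\delta)$ and balancing the increasing bound $\epsilon\le x$ against the decreasing bound $\epsilon\le\frac{\delta}{4}-\beta-\frac{x}{2}$ by setting $x=\frac{\delta}{4}-\beta-\frac{x}{2}$ yields $x=\frac{\delta}{6}-\frac{2\beta}{3}$, which is precisely the admissible range $\epsilon\le\frac{\delta}{6}-\frac{2\beta}{3}$ in the statement.

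With $h(t)$ fixed this way both exponential factors are at most $\exp(-\theta t^\epsilon)$, and I would bound the prefactors by monotonicity of $G$: since $G(d_t)\le G(0)=G_0$ we have $\frac{t-h(t)}{1-G(d_t)}\le\frac{t}{1-G_0}$, $\exp(G(d_t)h(t))=O(1)$, and $2(t-h(t))\le 2t$. Collecting the three contributions and absorbing the bounded $\exp(G(d_t)h(t))$ term into the linear terms for large $t$ produces $c_1 t\exp(-\theta t^\epsilon)$; tracking the $G_0$-dependent factors carefully is what yields the stated $c_1\le\frac{2(3-2G_0)}{1-G_0}$. For the \emph{furthermore} estimate I would absorb the factor $t$ into the exponent: since $t\mapsto\theta t^\epsilon-\ln t$ has derivative $\frac1t(\theta\epsilon t^\epsilon-1)$, turning positive exactly at $t=(\frac{1}{\theta\epsilon})^{1/\epsilon}$, beyond this threshold $\theta t^\epsilon$ dominates $\ln t$, and trading a fraction of the rate gives $c_1 t\exp(-\theta t^\epsilon)\le c_1\exp(-c_2 t^\epsilon)$ for a suitable $c_2\in(0,\theta)$.

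The step I expect to be the main obstacle is the joint selection in the second paragraph: choosing $h(t)$ and $\beta'$ so that both exponential terms decay at the common rate $\exp(-\theta t^\epsilon)$ while the prefactor remains linear and $h(t)<t$. The balancing of the two competing exponent constraints is what simultaneously pins down the threshold $\epsilon\le\frac{\delta}{6}-\frac{2\beta}{3}$ and drives the value of $c_1$; getting these exponents right is the crux, whereas the remaining simplifications are essentially mechanical.
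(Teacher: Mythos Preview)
Your proposal is correct and follows essentially the same approach as the paper, which derives Proposition~\ref{prop:mainresult} from Theorem~\ref{theorem_sg} by choosing $h(t)=t^\zeta$ and balancing the exponents. The paper's one-line justification takes $\zeta\le 1-\tfrac{\delta}{2}$, which makes your constraint~(i) read \emph{exactly} $\epsilon\le\beta'(2-\delta)$ (rather than only to leading order); your alternative choice $h(t)\asymp 1/G(d_t)$ corresponds to the slightly larger exponent $\zeta=1-\tfrac{\delta}{2}+\beta'(2-\delta)$ but leads to the same balanced threshold $\epsilon\le\tfrac{\delta}{6}-\tfrac{2\beta}{3}$, and the remaining prefactor and ``furthermore'' simplifications you outline match the paper's intent.
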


Proposition~\ref{prop:mainresult} follows from Theorem~\ref{theorem_sg} for  $h(t)=t^\zeta$ with $\zeta\le1-\frac{\delta}{2}$ and $\epsilon\le\frac{\delta}{6}-\frac{2\beta}{3}$. 



First and obvious implication of Proposition~\ref{prop:mainresult} is that the concentration of the opinion difference around $0$ in a stable SBC dynamics is much tighter than that of sum of i.i.d. noise given by $S(t)=\sum_{\tau=1}^t n^{(b)}(\tau)$. Since it is well known that for any $\epsilon>0$ and $c>0$ \cite{lugosi},
\[\mathbb{P}(|S(t)| \ge c~t^{\frac{1}{2}-\epsilon})=1-o(1).\]

The behavior of sum of i.i.d noise is a benchmark of interest to us since the SBC dynamics would behave like that if $G(x)=0$ for all $x>0$. Thus, Proposition~\ref{prop:mainresult} shows that $G(x) \gtrsim \frac{1}{x^{2-\delta}}$, for some $\delta>0$, ensures asymptotic stability as well as closeness of opinions at finite time.

A direct corollary of Proposition~\ref{prop:mainresult} is that, for some $a, b>0$,
\[\mathbb{P}_0(\bigcup_{\tau=t}^\infty \{|Y(\tau)| \ge c~\tau^{\frac{1}{2}-\beta}\}) \le a \exp(-b t^{\epsilon})\]
for $\epsilon\le {\frac{\delta}{6}-\frac{2\beta}{3}}$ and large enough $t$, which follows using union bound. 
This means that the probability of the event that $|Y(\tau)|$ does not cross $\tau^{\frac{1}{2}-\beta}$ after $\tau=t$ approaches $1$ fast as $t\to \infty$. This, in turn, implies that $Y(t)$ almost surely remains within an envelope of the shape $t^{\frac{1}{2}-\beta}$ for $\beta>0$.


The opinion difference process $Y(t)$ is a Markov process \cite{Baccelli2017}. A high probability bound of the above kind is not uncommon for well-behaved Markov processes. 
However,  we note that $Y(t)$ is structurally quite different from the Markov chains observed in areas such as queuing systems and population dynamics. 

The Markov process $Y(t)$ lies in the class of asymptotically drift zero Markov processes, i.e., its expected conditional drift given $Y(t)=y$ tends to zero as $|y|\to \infty$. This is because the probability of influence decays with increasing opinion differences. Note that even the stable dynamics has asymptotically zero drift. 
Furthermore, unlike queuing processes, $|Y(t)|$ has unbounded jumps towards zero. 

Though the phenomenon of asymptotic zero drift is in direct opposition to strong concentration, unbounded jumps towards zero are conducive to concentration. However, the jumps are probabilistic and the probability decays with increasing opinion difference. So, it is not intuitively clear whether $Y(t)$ concentrates or not. In that sense,  Proposition~\ref{prop:mainresult} settles this dilemma affirmatively for stable SBC dynamics. For other SBC dynamics that are not stable, $Y(t)$ is either a null recurrent or a transient Markov process, and hence, is not expected to concentrate.

\section{Proof Outline of Theorem~\ref{theorem_sg}}
\label{sec:proofOutline}
In this section, we outline the proof of Theorem~\ref{theorem_sg} and discuss the main intuitions behind them. These intuitions are adapted and extended for obtaining  the results on bistar social graph. We start with a relatively simpler case of bounded $\tilde{n}(t)$ and discuss the basics of our proof technique in this context. Then, we extend that proof technique to sub-Gaussian noise in Proposition~\ref{prop:mainresult}.

\subsection{Bounded Noise: A Useful First Step}
\label{sec:boundedN}
As discussed above, the opinion difference process $Y(t)$ is a Markov chain \cite{Baccelli2017}
with unbounded state-dependent jumps and asymptotically zero drift. As a result, its analysis is intricate than the usual Markov chains \cite{Baccelli2017}. So, towards proving the main result in Theorem~\ref{theorem_sg}, as a first step, we consider a relatively simpler setting: $\tilde{n}(t)$ is bounded, i.e., it has a support $[-D, D]$ for some $D>0$ and $G(x) \gtrsim \frac{1}{x^{1-\delta}}$. 

\begin{theorem}\label{theorem_bounded}
Let $k=c~t^{\frac{1}{2}-\beta}$ for $c, \beta>0$ . Consider a two-agent stable dynamics with $G(x) \gtrsim \frac{1}{x^{1-\delta}}$ for some $\delta>0$, and bounded noise model. Then,  for all $t\ge0$, \[\mathbb{P}_0(\lvert Y(t) \rvert \geq k) \leq 2 \Big (\frac{t}{1-G(Dt)}+1 \Big ) \exp{\Big(-\frac{\sqrt{2 G(Dt)}}{D} k\Big)}.\]
\end{theorem}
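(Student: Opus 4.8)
The plan is to exploit two structural features special to the bounded-noise setting. First, since $|\tilde n(\tau)|\le D$ and $Y(0)=0$, an easy induction gives $|Y(s)|\le Ds\le Dt$ for every $s\le t$; because $G$ is non-increasing, the per-step reset probability is therefore bounded below by $p:=G(Dt)$ uniformly over the first $t$ steps. Second, I would decompose the upper-tail event according to the time of the \emph{last reset}. Writing $N$ for the number of i.i.d. noise increments accumulated in $Y(t)$ since the most recent reset, one has $N\in\{1,\dots,t\}$, and on $\{N=j\}$ the value is exactly the partial sum $S_j:=\sum_{\tau=t-j}^{t-1}\tilde n(\tau)$ of $j$ independent, symmetric, $[-D,D]$-valued increments. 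This yields the exact decomposition $\mathbb{P}_0(Y(t)\ge k)=\sum_{j=1}^t \mathbb{P}_0(N=j,\,S_j\ge k)$.

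The hard part is that $N$ and $S_j$ are \emph{not} independent: whether resets occur depends on the states, which depend on the very noise that constitutes $S_j$. I would resolve this by realizing resets through i.i.d. uniforms $U_\tau\sim\mathrm{Unif}[0,1]$, a reset occurring at the transition into time $\tau$ iff $U_\tau\le G(|Y(\tau-1)|)$. Since $G(|Y(\tau-1)|)\ge p$ throughout, a step with \emph{no} reset forces $U_\tau>p$; consequently $\{N\ge j\}\subseteq B_j:=\{U_{t-j+2}>p,\dots,U_t>p\}$, an event depending on the uniforms alone with $\mathbb{P}(B_j)=(1-p)^{j-1}$. Because $B_j$ is a function of the uniforms while $S_j$ is a function of the noise, the two are independent, so
\[
\mathbb{P}_0(N=j,\,S_j\ge k)\le \mathbb{P}\big(B_j\cap\{S_j\ge k\}\big)=(1-p)^{j-1}\,\mathbb{P}(S_j\ge k).
\]
This decoupling through the independent ``forced reset'' uniforms is the crux of the argument.

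It then remains to bound $\mathbb{P}(S_j\ge k)$ and sum the series. By Hoeffding's lemma each symmetric $[-D,D]$ increment satisfies $\mathbb{E}[e^{\lambda\tilde n}]\le e^{\lambda^2 D^2/2}$, i.e.\ $\tilde n\in\mathcal{SG}(D^2)$, so the Chernoff bound gives $\mathbb{P}(S_j\ge k)\le e^{-\lambda k}e^{j\lambda^2 D^2/2}$ for every $\lambda>0$. Choosing $\lambda=\sqrt{2p}/D$ makes $\lambda^2D^2/2=p$, turning the geometric ratio into $(1-p)e^{p}$, which is at most $1$ since $\tfrac{d}{dp}[(1-p)e^p]=-pe^p\le0$ and the value at $p=0$ equals $1$. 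Hence
\[
\mathbb{P}_0(Y(t)\ge k)\le e^{-\lambda k}\sum_{j=1}^t(1-p)^{j-1}e^{jp}=e^{p}e^{-\lambda k}\sum_{i=0}^{t-1}\big[(1-p)e^{p}\big]^{i}\le t\,e^{p}\,e^{-\lambda k}\le \frac{t}{1-p}\,e^{-\lambda k},
\]
using $e^{p}\le(1-p)^{-1}$ at the last step. The symmetry of $\tilde n$ supplies the matching lower-tail bound, and doubling (absorbing the harmless $+1$ into the prefactor) produces exactly the claimed $2\big(\tfrac{t}{1-G(Dt)}+1\big)\exp\!\big(-\tfrac{\sqrt{2G(Dt)}}{D}k\big)$.

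I note that the derivation uses only the monotonicity of $G$ and the a priori envelope $|Y|\le Dt$, so the stated inequality in fact holds for all $t\ge0$ without any growth assumption; the hypothesis $G(x)\gtrsim x^{-(1-\delta)}$ enters only afterward, guaranteeing that $p=G(Dt)$ decays slowly enough that the exponent $\tfrac{\sqrt{2p}}{D}k\gtrsim t^{\delta/2-\beta}$ diverges and the right-hand side vanishes as $t\to\infty$.
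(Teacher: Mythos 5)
Your proof is correct, and it reaches the stated bound by a genuinely different mechanism than the paper, even though the quantitative ingredients coincide. The paper works at the level of moment generating functions: it derives the one-step recursion $\mathbb{E}_0[e^{\lambda Y(t+1)}]\le M_{\tilde n}(\lambda)\bigl(\mathbb{E}_0[e^{\lambda Y(t)}](1-G(Dt))+1\bigr)$ using the same envelope $|Y(\tau)|\le D\tau$ and monotonicity of $G$, unrolls it into a geometric sum, bounds $M_{\tilde n}(\lambda)$ by Hoeffding's lemma and $e^x\le\frac{1}{1-x}$, and only then applies Chernoff with the same choice $\lambda=\sqrt{2G(Dt)}/D$. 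You instead decompose the event pathwise by the time of the last reset and decouple the reset pattern from the accumulated noise via the driving uniforms ($\{N\ge j\}\subseteq B_j$ with $B_j$ a function of the uniforms alone, hence independent of $S_j$); this decoupling step has no analogue in the paper's argument, where the dependence between resets and noise is handled implicitly by conditioning on $Y(t)$ inside the recursion. The two routes produce term-by-term the same geometric series (your $(1-p)^{j-1}e^{jp}$ terms are exactly the paper's $\bigl[M_{\tilde n}(\lambda)(1-G(Dt))\bigr]^{i}$ terms), and your final constant $\frac{2t}{1-G(Dt)}$ is marginally sharper than the claimed $2\bigl(\frac{t}{1-G(Dt)}+1\bigr)$, so the theorem follows. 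What your version buys is an explicit probabilistic interpretation of each term (last-reset time $j$, survival probability $(1-p)^{j-1}$, tail of a $j$-step random walk) and the observation, which the paper also makes only in passing, that the inequality needs no growth hypothesis on $G$; what the paper's version buys is a recursion template that transfers directly to the conditional-MGF arguments used for the sub-Gaussian and bistar cases, where the deterministic envelope is replaced by a high-probability one and your clean independence of $B_j$ from $S_j$ would be harder to salvage.
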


The following simpler bound can be obtained from Theorem~\ref{theorem_sg}, for all sufficiently large $t$.

\begin{proposition}
\label{prop:boundedSimpleStatement}
Consider a two-agent stable dynamics with $G(x) \gtrsim \frac{1}{x^{1-\delta}}$ for some $\delta>0$, and bounded noise model with the assumed characteristics. Let $k=c~t^{\frac{1}{2}-\beta}$ for $c, \beta>0$. Then, for $\theta>0$, $c_1\le \frac{4-2G_0}{1-G_0}$ and $\epsilon\le\frac{\delta}{2}-\beta$,
\[\mathbb{P}_0(|Y(t)| \ge k) \leq c_1 t \exp{(-\theta t^\epsilon)}\] for all $t>0$ such that $\sqrt{t^{1-2\beta-2\epsilon}G(Dt)}\ge \frac{\theta D}{\sqrt{2}c}$.
In addition, for $t>\Big(\frac{1}{\theta~\epsilon}\Big)^\frac{1}{\epsilon}$ and $c_2>0$, 
\[\mathbb{P}_0(\lvert Y(t) \rvert \geq k)\leq c_1 \exp{(-c_2 t^\epsilon)}.\]
\end{proposition}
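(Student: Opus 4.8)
The plan is to derive this proposition directly from the non-asymptotic bound of Theorem~\ref{theorem_bounded}, which states that $\mathbb{P}_0(|Y(t)|\ge k)\le 2\bigl(\tfrac{t}{1-G(Dt)}+1\bigr)\exp\bigl(-\tfrac{\sqrt{2G(Dt)}}{D}k\bigr)$. The work reduces to two elementary manipulations: first, absorbing the polynomial prefactor into a clean $c_1 t$; and second, showing that under the stated condition on $t$ the exponent $\tfrac{\sqrt{2G(Dt)}}{D}k$ is at least $\theta t^\epsilon$.

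For the prefactor, I would use that $G$ is monotonically non-increasing with $G(0)=G_0$, so $G(Dt)\le G_0$ and hence $1-G(Dt)\ge 1-G_0$. This gives $\tfrac{t}{1-G(Dt)}\le \tfrac{t}{1-G_0}$, and for $t\ge 1$ one has $1\le t$, so $2\bigl(\tfrac{t}{1-G(Dt)}+1\bigr)\le 2t\bigl(\tfrac{1}{1-G_0}+1\bigr)=\tfrac{4-2G_0}{1-G_0}\,t$, which identifies the constant $c_1=\tfrac{4-2G_0}{1-G_0}$.

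For the exponent, substituting $k=c\,t^{1/2-\beta}$ and rearranging, the inequality $\tfrac{\sqrt{2G(Dt)}}{D}c\,t^{1/2-\beta}\ge \theta t^\epsilon$ is equivalent to $\sqrt{t^{1-2\beta-2\epsilon}G(Dt)}\ge \tfrac{\theta D}{\sqrt 2\,c}$, which is exactly the hypothesis; under it, $\exp(-\tfrac{\sqrt{2G(Dt)}}{D}k)\le\exp(-\theta t^\epsilon)$ and the first bound $c_1 t\exp(-\theta t^\epsilon)$ follows. To see that this hypothesis is nonvacuous for large $t$, I would invoke stability, $G(x)\gtrsim x^{-(1-\delta)}$, which gives $G(Dt)\ge C_0(Dt)^{-(1-\delta)}$ for large $t$; then $t^{1-2\beta-2\epsilon}G(Dt)\ge C_0 D^{-(1-\delta)}t^{\delta-2\beta-2\epsilon}$, and the choice $\epsilon\le\tfrac{\delta}{2}-\beta$ makes the exponent $\delta-2\beta-2\epsilon$ nonnegative, so the left side stays bounded away from zero and the condition holds for all sufficiently large $t$.

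The second bound follows by folding the polynomial factor into the exponential: writing $c_1 t\exp(-\theta t^\epsilon)=c_1\exp(\ln t-\theta t^\epsilon)$, I would note that $\ln t=o(t^\epsilon)$, so for a suitable $c_2\in(0,\theta)$ one has $\ln t\le(\theta-c_2)t^\epsilon$, i.e. $c_1 t\exp(-\theta t^\epsilon)\le c_1\exp(-c_2 t^\epsilon)$. The threshold $t>(1/(\theta\epsilon))^{1/\epsilon}$ is precisely the location where $\ln t-\theta t^\epsilon$ begins to decrease (its derivative $\tfrac1t-\theta\epsilon t^{\epsilon-1}$ vanishes there), which is the natural regime in which $t\exp(-\theta t^\epsilon)$ decays like a pure exponential. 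Since the heavy lifting is already done in Theorem~\ref{theorem_bounded}, the only delicate point here is the bookkeeping in this last step---choosing $c_2$ and the threshold so that the polynomial prefactor is genuinely dominated---rather than any substantive new estimate.
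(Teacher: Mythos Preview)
Your proposal is correct and follows essentially the same route as the paper: the paper itself states that Proposition~\ref{prop:boundedSimpleStatement} is a direct consequence of Theorem~\ref{theorem_bounded}, and in Appendix~\ref{proof_bound_1} it obtains the compact form by exactly the two manipulations you describe---bounding the prefactor via $G(Dt)\le G_0$ to get $c_1=\tfrac{4-2G_0}{1-G_0}$, and rewriting the exponent condition as $\tfrac{\sqrt{2G(Dt)}}{D}k\ge\theta t^\epsilon$---followed by the same absorption of the $t$ factor for $t>(1/(\theta\epsilon))^{1/\epsilon}$. Your write-up is in fact more explicit than the paper's terse treatment, including the helpful check that the hypothesis is nonvacuous under $\epsilon\le\tfrac{\delta}{2}-\beta$.
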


This is a direct consequence of Theorem~\ref{theorem_bounded}. As we discuss later, the ingredients in the proof of Theorem~\ref{theorem_bounded} are also vital in proving Theorem~\ref{theorem_sg} and the bound for bistar graph in Proposition~\ref{prop:extendedstar}.

Details of the proof of this result is presented in Appendix~\ref{proof_bound_1}. Here, we only discuss the main intuitions that are also useful in proving Proposition~\ref{prop:mainresult}.  The main part of the proof involves obtaining a suitable upper bound on $\mathbb{E}_0[e^{\lambda  Y(t) }]$, where $\mathbb{E}_0[\cdot]$ represents $\mathbb{E}[\cdot|Y(0)=0]$. The bound on $\mathbb{E}_0[e^{\lambda  Y(t) }]$ is obtained using three main steps.

First, for any $t\ge 0$, obtain an upper bound on $\mathbb{E}_0[e^{\lambda  Y(t+1) }]$ in terms of $\mathbb{E}_0[e^{\lambda  Y(t) }]$, $\mathbb{E}[e^{\lambda  \tilde{n}(t) }]$ and $G(|Y(t)|)$:
\begin{align*}
    \mathbb{E}_0[e^{\lambda  Y(t+1) }]
    &\leq \mathbb{E}[e^{\lambda  \tilde{n}(t) }]\Big(\mathbb{E}_0[e^{\lambda Y(t)}(1-G(|Y(t)|))]+1\Big).
\end{align*}

Second, note that as the noise is bounded, $|Y(\tau)|\le D \tau$ for any $\tau$. Using this fact along with the recursive relation, for any $t>0$, we obtain a bound on $\mathbb{E}_0[e^{\lambda  Y(t) }]$ as a sum of products of the moment generating function of noise at $\lambda$, $M_{\tilde{n}}(\lambda)$, and $\{G(D i): 0\le i\le t\}$:
\begin{align*}
    \mathbb{E}_0[e^{\lambda Y(t)}]&\leq M_{\tilde{n}}(\lambda) + \prod_{i=0}^{t-1}M_{\tilde{n}} (\lambda)(1-G(Di))\nonumber\\&+\sum \limits_{i=0}^{t-2} M_{\tilde{n}}(\lambda)^{t-i} \prod_{j=0}^{t-2-i}(1-G(D(t-1-j)).\nonumber
\end{align*}

Third, by using Hoeffding's Lemma \cite[Sec. 2.3]{lugosi} for the moment generating function of noise, $M_{\tilde{n}}(\lambda)$, we obtain the following bound on $\mathbb{E}_0[e^{\lambda  Y(t) }]$ in terms of a sum of products involving only $\lambda$ and $G(Dt)$:  
\begin{align*}
    \bar{\gamma}(\lambda)\sum \limits_{i=0}^{t-1} \Big[ \bar{\gamma}(\lambda)(1-G(Dt))\Big]^{i}\nonumber+\Big[\bar{\gamma}(\lambda)(1-G(Dt))\Big]^t.\nonumber
\end{align*}
where $\bar{\gamma}(\lambda)=\frac{1}{1-\frac{\lambda^2 D^2}{2}}$.

Finally, by the use of the Chernoff bound and a suitable choice of $\lambda$, the bound in Theorem~\ref{theorem_bounded} follows. 

The above proof technique has an interesting aspect to it: it starts with an initial weak bound on $|Y(t)|$ and refines it to a much stronger bound. This is also the case in the proof of Proposition~\ref{prop:mainresult}. However, in the latter case, a tighter initial weak bound is utilized, which leads to a tighter final bound. As a result,  Proposition~\ref{prop:mainresult} applies to all influence functions for which the two-agent SBC dynamics is stable.

\subsection{Extending to Sub-Gaussian Noise}
\label{sec:subG}

Proof of Theorem~\ref{theorem_sg} builds on the three-step approach discussed in Sec.~\ref{sec:boundedN}. However, proof technique for bounded noise cannot be directly extended to sub-Gaussian noise.
An important ingredient in the proof for the bounded noise case was the bound on $G(|Y(t)|)$ in the second step, which used the fact that $|Y(\tau)|\le D\tau$ for any $\tau$. Clearly, this is not true when the noise is sub-Gaussian. We circumvent this issue by introducing a high probability bound on $|Y(t)|$ instead of a deterministic bound, and by adapting the subsequent proof steps and the final step involving the Chernoff bound according to that high probability bound.

Towards that, we define events $A_t=\bigcap\limits_{\tau=h(t)}^t \{|Y(\tau)| \leq d_\tau\}$ where $h(t)<t$ and $d_{\tau}=D~\tau^{\frac{1}{2}+\beta'}$ for some $\beta'>0$.  

We first discuss the final step involving the Chernoff bound since that would place the changes we make to the three preceding steps in the proof of Theorem~\ref{theorem_bounded} in the right perspective. 

In the proof of Theorem~\ref{theorem_bounded}, the natural Chernoff bound is 
\[\mathbb{P}_0(Y(t)\ge k) \le \mathbb{E}_0[e^{\lambda  Y(t) }] \exp{(-\lambda k)}.\]

In the current setting, we adopt the following useful modification.
\begin{align}
    \mathbb{P}_0(Y(t)\ge k|A_{t-1})  &\le \mathbb{E}_0[e^{\lambda  Y(t) }| A_{t-1}] \exp{(-\lambda k)} \nonumber \\
    \mathbb{P}_0(Y(t)\ge k) & \le \mathbb{P}_0(Y(t)\ge k|A_{t-1})+ 1 - \mathbb{P}_0(A_{t-1}), \label{eq:totalProb}
\end{align}
where, $A_{t-1}$ is an event defined in terms of $\{Y(\tau): 0\le \tau \le t-1\}$. The choice of the event $A_t$ is dictated by the fact that $|Y(t)|$ should be bounded on $A_t$ with high probability and  $1-\mathbb{P}_0(A_{t-1})$ should be rapidly approaching $0$ as $t\to\infty$.

Note that, clearly, $|Y(t)|$ is stochastically smaller than the absolute value of the sum of i.i.d. sub-Gaussian noise $|\sum_{\tau=1}^t \tilde{n}(\tau)|$. Thus, $\mathbb{P}_0(|Y(t)|\ge D~ t^{\frac{1}{2}+\beta'})$
is no more than
\[\mathbb{P}(|\sum_{\tau=1}^t \tilde{n}(\tau)|\ge D~t^{\frac{1}{2}+\beta'}) \le 2~\exp\left(-\frac{D^2}{2 \sigma^2} t^{2\beta'}\right),\]
which fast approaches $0$ as $t\to\infty$. Here, the bound on $\mathbb{P}(|\sum_{\tau=1}^t \tilde{n}(\tau)|\ge D~ t^{\frac{1}{2}+\beta'})$ follows from the concentration inequality for the sum of i.i.d. sub-Gaussian random variables.

Based on the above observations, while keeping the adaptation of the final step involving the Chernoff bound in mind, we adapt the three main steps from the proof of Theorem~\ref{theorem_bounded} as follows.

First, define $A_t=\bigcap\limits_{\tau=h(t)}^t\{|Y(\tau)|\le D~ \tau^{\frac{1}{2}+\beta'}\}$ and obtain an upper bound on $\mathbb{E}_0[e^{\lambda  Y(t+1) }| A_t]$ in terms of $\mathbb{E}_0[e^{\lambda  Y(t)}|A_t]$, $\mathbb{E}[e^{\lambda  \tilde{n}(t) }]$ and $G(|Y(t)|)$. Then, using the fact that $A_t$ imposes a symmetric constraint on $Y(t)$ and $Y(t)$ has a symmetric distribution, we show that
$\mathbb{E}_0[e^{\lambda  Y(t)}|A_t] \le \mathbb{E}_0[e^{\lambda  Y(t)}|A_{t-1}]$. This gives a recursive upper bound on  $\mathbb{E}_0[e^{\lambda  Y(t+1) }| A_t]$ in terms of $\mathbb{E}_0[e^{\lambda  Y(t)}|A_{t-1}]$, $\mathbb{E}[e^{\lambda  \tilde{n}(t) }]$ and $G(|Y(t)|)$.

Second, we use the fact that $|Y(t)| \le D~t^{\frac{1}{2}+\beta'}$ on the event $A_t$  to obtain a bound on $\mathbb{E}_0[e^{\lambda  Y(t) }|A_{t-1}]$ as a sum of products of the moment generating function of noise at $\lambda$ and $\{G(D~i^{\frac{1}{2}+\beta'}): 0\le i\le t\}$. 

Third, using the sub-Gaussian bound on the moment generating function of noise, we obtain a bound on $\mathbb{E}_0[e^{\lambda  Y(t) }|A_{t-1}]$ in terms of a sum of products involving only $\lambda$ and $G(D~ t^{\frac{1}{2}+\beta'})$. 

Finally, upper bound on the probability of $A_t$ is obtained using sub-Gaussian concentration and all the bounds are plugged into \eqref{eq:totalProb}.
A detailed proof that essentially formalizes the above steps is presented in Appendix \ref{proof_bound_sg}. 

The bound in Proposition~\ref{prop:mainresult} applies to the whole range of influence functions for which dynamics is stable. On the other hand, the bound in Proposition~\ref{prop:boundedSimpleStatement} applies to influence functions satisfying $G(x)\gtrsim \frac{1}{x^{1-\delta}} $. The main reason behind the improvement from Proposition~\ref{prop:boundedSimpleStatement} to Proposition~\ref{prop:mainresult} is the change in the second step of the proof of Theorem~\ref{theorem_bounded}. In the proof for the sub-Gaussian case, the initial bound on $|Y(t)|$ in the second step of the proof of Theorem~\ref{theorem_sg}, though probabilistic, is a significantly tighter bound, which results in a much better final bound. 

\section{SBC Dynamics on a Bistar Social Graph}\label{section:network}

\begin{figure}[h]
    \centering
 \begin{tikzpicture}
    \draw[-] (4,0) -- (4.5,0);
    \node at (4.7,0) {${G}$};
    \draw[-,dashed] (4,-0.4) -- (4.5,-0.4);
    \node at (4.7,-0.4) {$\tilde{G}$};
    \draw (0,0) circle [radius = 0.2] node {$1$};
    \draw[-] (0.2,0) -- (1.2,0);
    \draw (1.4,0) circle [radius = 0.2] node {$2$};
    \draw (-1.2,0) circle [radius = 0.2];
    \draw (-0.9,-0.9) circle [radius = 0.2];
    \draw (0,-1.2) circle [radius = 0.2];
    \draw[-,dashed] (-0.2,0) -- (-1,0);
    \draw[-,dashed] (0,-0.2) -- (0,-1);
    \draw[-,dashed] (-0.2,-0.2) -- (-0.8,-0.8);
    
    \draw (2.6,0) circle [radius = 0.2];
    \draw[-,dashed] (1.6,0) -- (2.4,0);
    \draw (1.4,-1.2) circle [radius = 0.2];
    \draw[-,dashed] (1.4,-0.2) -- (1.4,-1);
    \draw (2.3,-0.9) circle [radius = 0.2];
    \draw[-,dashed] (1.6,-0.2) -- (2.2,-0.8);
    
\end{tikzpicture}
    \caption{An illustration of multi-agent systems on a bistar graph}%
\label{fig:MA}
\end{figure}
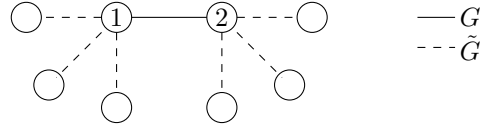

In this section, we utilize the insights obtained from two-agent dynamics and characterize opinion differences in multi-agent systems. Consider the bistar social graph in Fig.~\ref{fig:MA}, in which two agents,  $1$ and $2$, share an edge. There are two disjoint sets of agents $\mathcal{F}_1$ and $\mathcal{F}_2$, which share edges with agent $1$ and $2$, respectively, but share no edges among themselves. For any $f \in \mathcal{F}_1\cup\mathcal{F}_2$, $G_{f1}=G_{f2}=0$, i.e., agents $1$ and $2$ are not influenced by agents $\mathcal{F}_1\cup\mathcal{F}_2$. 
However, agents $1$ and $2$ are influenced by each other and their interactions follow the two-agent dynamics with influence function $G$, as discussed before. 

Agents $1$ and $2$ influence agents $\mathcal{F}_1$ and $\mathcal{F}_2$, respectively with an influence function $\tilde{G}(\cdot)$. However, $G_{1f}=0$ for all $f \in \mathcal{F}_2$ and $G_{2f}=0$ for all $f \in \mathcal{F}_1$.

If the opinion of $f\in \mathcal{F}_1$ at time $t$ is $X_f(t)$, then it updates as $X_1(t)+n_f(t)$ with probability $\tilde{G}(|Y_{f1}(t)|)$ and $X_f(t)+n_f(t)$ with probability $1-\tilde{G}(|Y_{f1}(t)|)$. Here, $n_f(t)$ captures noisy opinion updates at $t$. $Y_{f1}(t)=X_f(t)-X_1(t)$ represents the opinion difference between agent $1$ and its follower $f$. A similar dynamics happen for agents in $\mathcal{F}_2$ under the influence of agent $2$. 

The above dynamics represents a model of leader-follower societies with two leaders or parties of a bipartisan democratic polity. This SBC dynamics is stable for $G(x)\gtrsim \frac{1}{x^{\alpha}}$ for $\alpha<2$ and $\tilde{G}(x)\gtrsim \frac{1}{x^{\alpha'}}$ for $\alpha'<2$ \cite[Theorem III.6]{Baccelli-Infocom}.  
The main result of this section is a concentration bound on the opinion difference between a leader and their follower in such stable dynamics at a finite time. We assume that the initial opinion difference between the agents is zero.

Although the dynamics is related to the two agents, its behavior is quite different from that of the two-agent dynamics. This can be observed by following the dynamics of $Y_{f1}(t)$ for $f\in \mathcal{F}_1$. In two-agent dynamics, $Y_{f1}(t)$ evolves due to noise or $|Y_{f1}|$ jumps closer to $0$ if $f$ is  influenced by $1$. However, as agent $2$ can influence agent $1$ and pull it closer to itself, $|Y_{f1}|$ can experience jump increments. As a result, analyzing this dynamics is much more challenging than the two-agent dynamics. Here, by extending the proof techniques for two-agent dynamics, we obtain bounds on opinion differences for a subset of $G$ and $\tilde{G}$ and bounded noise. Define $G_0:=G(0)$ and $\Tilde{G}_0:=\Tilde{G}(0)$.

\begin{proposition}
\label{prop:extendedstar}
Consider multi-agent stable dynamics with bounded noise as described above. Assume $G(x)=\frac{G_0}{1+x^{1-\delta}}$ for some $\delta>0$ and $0<G_0\le1$, and $\tilde{G}(x)\gtrsim \Big(\frac{1}{x}\Big)^{\frac{2}{3}-\tilde{\delta}}$ for some $\tilde{\delta}>0$. Define $\tilde{k}=c~t^{\frac{1}{2}-\tilde{\beta}}$ for $c, \tilde{\beta}>0$. Then, for $t>\Big(\frac{2}{\theta\varsigma}\Big)^\frac{1}{\varsigma}$ and  positive (independent of $t$) constants $\theta$, $c_1\le 2\max \{\frac{4-2G_0}{1-G_0},\frac{4-2\tilde{G}_0}{1-\Tilde{G}_0}\}$ and $c_2>0$,
\begin{align*}
    \mathbb{P}_0(|Y_{f1}(t)|\ge \tilde{k})\le c_1 \exp{(-c_2 t^{\varsigma})}
\end{align*}
where $\varsigma=\min\{\epsilon\xi, \beta-\tilde{\beta}\}$, $0<\xi<1-2\beta$ and $\epsilon\le \frac{\delta}{2}-\beta$.
\end{proposition}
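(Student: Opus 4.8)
The plan is to reduce the follower's opinion difference $Y_{f1}(t)=X_f(t)-X_1(t)$ to a two-agent-type SBC dynamics driven by an \emph{effective noise} that couples to the inter-leader difference $Y(t)=X_1(t)-X_2(t)$, and then to control that coupling using the concentration already proved for the two-agent dynamics. First I would write out the one-step recursion for $Y_{f1}$. Conditioning on whether $f$ is influenced by $1$ (probability $\tilde G(|Y_{f1}(t)|)$) and on whether $1$ is influenced by $2$ (probability $G(|Y(t)|)$), a direct case analysis gives: with probability $\tilde G(|Y_{f1}(t)|)$ one has $Y_{f1}(t+1)=\frac{Y(t)}{2}\mathbf{1}[1\text{ infl. }2]+\tilde n_f(t)$, and otherwise $Y_{f1}(t+1)=Y_{f1}(t)+\frac{Y(t)}{2}\mathbf{1}[1\text{ infl. }2]+\tilde n_f(t)$, where $\tilde n_f(t)=n_f(t)-n_1(t)$. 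Thus $Y_{f1}$ follows exactly the reset-or-drift structure of the two-agent dynamics with influence $\tilde G$, except that each increment carries the extra term $\frac{Y(t)}{2}\mathbf{1}[1\text{ infl. }2]$, the displacement agent $1$ inherits when it is pulled toward agent $2$.

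The key observation is that this extra term is governed entirely by $Y(t)$. Since $G(x)=G_0/(1+x^{1-\delta})\gtrsim x^{-(1-\delta)}$ and the noise is bounded, the leaders' dynamics is precisely the bounded-noise two-agent dynamics of Theorem~\ref{theorem_bounded}; hence, by Proposition~\ref{prop:boundedSimpleStatement} together with a union bound over $\tau$, the event $B_t=\bigcap_{\tau\ge t^\xi}\{|Y(\tau)|\le c\,\tau^{1/2-\beta}\}$ satisfies $1-\mathbb{P}_0(B_t)\le a\exp(-b\,t^{\epsilon\xi})$ for $\epsilon\le\frac{\delta}{2}-\beta$, with the starting time $t^\xi$ chosen so that the complement decays at rate $t^{\epsilon\xi}$ while the uncontrolled initial segment stays negligible (this is where $\xi<1-2\beta$ enters). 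I would then split via total probability, exactly as in \eqref{eq:totalProb}, through $\mathbb{P}_0(|Y_{f1}(t)|\ge\tilde k)\le \mathbb{P}_0(|Y_{f1}(t)|\ge\tilde k\mid B_t)+(1-\mathbb{P}_0(B_t))$, so that on $B_t$ the extra increment is deterministically bounded by $\frac{c}{2}\tau^{1/2-\beta}$. Because $\tilde n_f$ is symmetric, $Y(\tau)$ is sign-symmetric, and the influence event depends on $Y(\tau)$ only through $|Y(\tau)|$, the effective noise stays symmetric and zero-mean even after conditioning on $B_t$ (which constrains only $|Y(\tau)|$); this symmetry is what lets the conditional moment-generating-function argument go through, just as in Sec.~\ref{sec:subG}.

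On $B_t$ I would replay the three-step recursion behind Theorem~\ref{theorem_bounded} for $Y_{f1}$ with influence $\tilde G$: (i) bound $\mathbb{E}_0[e^{\lambda Y_{f1}(\tau+1)}\mid B_\tau]$ in terms of $\mathbb{E}_0[e^{\lambda Y_{f1}(\tau)}\mid B_{\tau-1}]$, the effective-noise MGF, and $\tilde G(|Y_{f1}(\tau)|)$; (ii) unfold the recursion using the growth bound $|Y_{f1}(\tau)|\le C\,\tau^{3/2-\beta}$ valid on $B_t$ (the jump scale $\tau^{1/2-\beta}$ summed over the steps) to replace $\tilde G(|Y_{f1}(\tau)|)$ by the lower bound $\tilde G(C\tau^{3/2-\beta})$; (iii) bound the effective-noise MGF and collapse the resulting sum of products into an expression involving only $\lambda$ and $\tilde G(C t^{3/2-\beta})$. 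Here the exponent $\tfrac23$ in $\tilde G(x)\gtrsim x^{-(2/3-\tilde\delta)}$ is calibrated exactly to the $\tau^{3/2}$ growth: it makes the effective influence decay like $\tau^{-1+\Theta(\tilde\delta)}$, the regime in which the two-agent recursion yields concentration. A final Chernoff step with threshold $\tilde k=c\,t^{1/2-\tilde\beta}$ and an optimized $\lambda$ produces a conditional bound whose exponent is of order $t^{\beta-\tilde\beta}$; combining it with the $t^{\epsilon\xi}$ tail of $B_t$ gives the stated rate $\varsigma=\min\{\epsilon\xi,\beta-\tilde\beta\}$, and absorbing the polynomial prefactor for $t>(2/(\theta\varsigma))^{1/\varsigma}$ yields $c_1\exp(-c_2 t^{\varsigma})$.

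The hard part is handling the extra increment $\frac{Y(\tau)}{2}\mathbf{1}[1\text{ infl. }2]$, which—unlike the bounded noise of Theorem~\ref{theorem_bounded}—has a scale $\tau^{1/2-\beta}$ that \emph{grows} with time and is heavy-tailed (large, but realized only with the small probability $G(|Y(\tau)|)$). A crude treatment that replaces it by its worst-case support destroys the concentration, since the effective-noise scale entering the Chernoff exponent would then overwhelm the influence term. The delicate point is therefore to retain the smallness supplied by the rare-jump probability $G(|Y(\tau)|)$ while propagating it through the MGF products, and to balance the growing effective-noise scale against the influence lower bound $\tilde G(C\tau^{3/2-\beta})$; this is precisely where the explicit form $G(x)=G_0/(1+x^{1-\delta})$ and the threshold exponent $\tfrac23$ on $\tilde G$ are used to make the exponent bookkeeping close.
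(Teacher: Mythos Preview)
Your proposal is correct and follows essentially the same route as the paper: the split through the high-probability envelope $B_t=\bigcap_{\tau\ge t^\xi}\{|Y(\tau)|\le \bar d_\tau\}$ on the leaders' difference, the conditional MGF recursion for $Y_{f1}$ carrying the extra factor $1-G(|Y|)+e^{\lambda Y/2}G(|Y|)$, the $O(t^{3/2})$ deterministic growth bound on $|Y_{f1}|$ used to lower-bound $\tilde G$, and the combination of the $t^{\epsilon\xi}$ tail of $B_t^C$ with the $t^{\beta-\tilde\beta}$ Chernoff exponent all match the paper's argument (its Theorem~\ref{theorem:extendedstar} and Lemmas~\ref{exstar:complement}--\ref{exstar:condition}). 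The one point you leave implicit is that the paper does \emph{not} optimize $\lambda$ but fixes $\lambda=2\ln 2/\bar d_t$ so that $e^{\lambda\bar d_t/2}=2$ and the delicate factor collapses to $1+G(\bar d_t)$ via a monotonicity argument (their Claim~\ref{star:convex}) that uses the explicit form $G(x)=G_0/(1+x^{1-\delta})$---precisely the step you identify as the hard part.
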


Note that the SBC dynamics is known to be stable for $G(x)\gtrsim \frac{1}{x^{\alpha}}$ for $\alpha<2$ and $\tilde{G}(x)\gtrsim \frac{1}{x^{\alpha'}}$ for $\alpha'<2$. In that sense, the above bound applies to a \emph{subset} of stable dynamics, with very strong influence functions \cite{Baccelli2015,Baccelli2017}. Based on numerical evidence, we conjecture that a qualitative similar concentration bound holds for all stable influence functions $G(\cdot)$ and $\tilde{G}(\cdot)$. However, the proof (in Appendix~\ref{proof:extendedstar}), while already challenging, does not extend directly to cover all stable dynamics between the leaders and followers.

Proposition~\ref{prop:extendedstar} is a direct consequence of the following bound on $|Y_{f1}(t)|$ at any $t$. Define $l(t)=t^\xi$ for $0<\xi<1$, $\bar{d}_\tau=D~\tau^{\frac{1}{2}-\beta}$ for some $\beta>0$, and $\tilde{d}_t=2D~t^{\frac{3}{2}}$. 
\begin{theorem}
\label{theorem:extendedstar}
Assume $G(x)=\frac{G_0}{1+x^{1-\delta}}$ for some $\delta>0$ and $0<G_0\le1$. With $\tilde{G}(x)\gtrsim \Big(\frac{1}{x}\Big)^{\frac{2}{3}-\tilde{\delta}}$ for some $\tilde{\delta}>0$, $\lambda=\frac{2\ln 2}{\bar{d}_t}$, and positive (independent of $t$) constants $\theta$, $c_1\le \frac{4-2G_0}{1-G_0}$,
\begin{align*}
    \mathbb{P}_0(|Y_{f1}(t)|&\ge \tilde{k})\le c_1t^2\exp{(-\theta l(t)^\epsilon)}\\
    &+2\Big(\frac{t-l(t)}{1-\tilde{G}(\tilde{d}_t)}+\exp{\Big(\frac{5}{8}\lambda^2 D^2 l(t)\Big)}\Big)\exp{(-\lambda \tilde{k})}
\end{align*}
for all $t$ such that $\sqrt{t^{1-2\beta-2\epsilon}G(Dt)}\ge \frac{\theta}{\sqrt{2}}$ where $\epsilon\le \frac{\delta}{2}-\beta$. 
\end{theorem}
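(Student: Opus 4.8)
The plan is to exploit the observation that the follower difference $Y_{f1}$ obeys a two-agent-type recursion driven by the influence function $\tilde G$, but corrupted by an additive \emph{outward} perturbation equal to $Z_\tau := Y_{12}(\tau)/2$ (where $Y_{12}:=X_1-X_2$) that is injected precisely on those steps where leader $1$ is itself pulled toward leader $2$. Writing out the four joint outcomes (whether $f$ is influenced by $1$, and whether $1$ is influenced by $2$) and taking conditional expectations, one obtains the factorized one-step identity
\[
\mathbb{E}_0\!\bigl[e^{\lambda Y_{f1}(\tau+1)}\mid\mathcal F_\tau\bigr]
= M_{\tilde n}(\lambda)\,W_\tau\,\bigl(\tilde G(|Y_{f1}(\tau)|)+(1-\tilde G(|Y_{f1}(\tau)|))e^{\lambda Y_{f1}(\tau)}\bigr),
\]
where $W_\tau := G(|Y_{12}(\tau)|)e^{\lambda Z_\tau}+1-G(|Y_{12}(\tau)|)$ is the perturbation factor and $M_{\tilde n}$ is the noise MGF. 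The rightmost factor is exactly the two-agent structure analyzed for Theorem~\ref{theorem_bounded}, so the entire difficulty concentrates in $W_\tau$. The crucial enabling fact is that the leader difference $Y_{12}$ is \emph{autonomous} (leaders are not influenced by followers), so it is itself a two-agent bounded-noise SBC process to which Proposition~\ref{prop:boundedSimpleStatement} applies.

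Building on this, I would first fix the good event $B=\bigcap_{\tau=l(t)}^{t}\{|Y_{12}(\tau)|\le \bar d_\tau\}$ and control its complement. By Proposition~\ref{prop:boundedSimpleStatement} applied to $Y_{12}$ (under the stated condition $\sqrt{t^{1-2\beta-2\epsilon}G(Dt)}\ge \theta/\sqrt2$ with $\epsilon\le\frac\delta2-\beta$) together with a union bound, $\mathbb{P}_0(B^{c})\le\sum_{\tau=l(t)}^{t}c_1\tau e^{-\theta\tau^{\epsilon}}\le c_1 t^2 e^{-\theta l(t)^{\epsilon}}$, which is exactly the first term. On $B$ the perturbations obey $|Z_\tau|\le\bar d_\tau/2$, and summing the bounded noise (of order $\tau$) with the accumulated perturbations $\tfrac12\sum_{s\le\tau}|Y_{12}(s)|=O(\tau^{3/2})$ yields the deterministic-on-$B$ envelope $|Y_{f1}(\tau)|\le\tilde d_\tau=2D\tau^{3/2}$; by monotonicity this gives the restoring lower bound $\tilde G(|Y_{f1}(\tau)|)\ge\tilde G(\tilde d_t)$ needed to mimic the second step of the proof of Theorem~\ref{theorem_bounded}.

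I would then iterate the recursion conditioned on $B$ with the fixed choice $\lambda=2\ln2/\bar d_t$, calibrated so that $e^{\lambda Z_\tau}\le 2$ on $B$. Splitting $[0,t]$ into an initial window $[0,l(t)]$ and a main window $[l(t),t]$: over the main window the restoring factor $1-\tilde G(\tilde d_t)$ plays the role of $1-G(Dt)$ in Theorem~\ref{theorem_bounded}, and the geometric sum collapses to $\tfrac{t-l(t)}{1-\tilde G(\tilde d_t)}$; over the initial window, where no restoring bound is invoked, I would bound each step by Hoeffding's lemma for $M_{\tilde n}$ together with $e^{\lambda Z_\tau}\le 2$, accumulating to the factor $\exp(\tfrac58\lambda^2D^2 l(t))$. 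Crucially, since $B$ is a symmetric constraint and $Y_{12}$ has a symmetric law, the odd (first-order) part of $W_\tau-1$ has vanishing conditional mean---the symmetry trick used for Theorem~\ref{theorem_sg}---so only the benign second-order effect of the perturbation survives and the main-window ratio stays $\le 1$ under the strong-influence assumption $\tilde G(x)\gtrsim (1/x)^{2/3-\tilde\delta}$. Collecting the two window contributions into a bound on $\mathbb{E}_0[e^{\lambda Y_{f1}(t)}\mid B]$, applying the Chernoff step as in \eqref{eq:totalProb}, doubling for the lower tail, and adding $\mathbb{P}_0(B^{c})$ produces the claimed inequality.

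The hard part is taming $W_\tau$: unlike the two-agent chain, $Y_{f1}$ has outward jumps $Z_\tau$, so the process is \emph{not} contractive and a naive bound $e^{\lambda Z_\tau}\le 2$ makes the per-step ratio exceed $1$. The resolution rests on three coupled ingredients: (i) the autonomy and concentration of $Y_{12}$, so that the perturbations are small on $B$; (ii) the symmetry of $Y_{12}$ on $B$, to cancel the first-order drift of the perturbation inside the MGF; and (iii) the requirement that $\tilde G$ be strong enough (exponent below $2/3$, so that $\tilde G(\tilde d_t)$ decays no faster than $t^{-1+\frac32\tilde\delta}$) for the restoring pull to dominate the residual second-order growth over the main window. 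A secondary technical nuisance is the shared noise $n_1$ coupling the leader and follower updates, which must be handled carefully when passing from the conditional recursion to expectations; this is precisely why the argument, though successful here, does not obviously extend to the full stable range of $\tilde G$.
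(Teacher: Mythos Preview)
Your overall architecture---conditioning on $B=\bigcap_{\tau=l(t)}^{t}\{|Y_{12}(\tau)|\le\bar d_\tau\}$, bounding $\mathbb{P}_0(B^c)$ via Proposition~\ref{prop:boundedSimpleStatement} and a union bound, the factorized one-step MGF identity with perturbation factor $W_\tau$, the deterministic envelope $|Y_{f1}|\le\tilde d_\tau$ on $B$, the two-window split, and the Chernoff step---matches the paper's proof exactly. The gap is in how you control $W_\tau$.

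The symmetry argument for the main window does not work as stated. Cancelling the odd part of $W_\tau-1=G(|Y_{12}|)(e^{\lambda Z_\tau}-1)$ in expectation would require $\mathbb{E}_0\bigl[G(|Y_{12}|)\,Z_\tau\,\cdot e^{\lambda Y_{f1}}(1-\tilde G(|Y_{f1}|))\mid B\bigr]=0$; but the only joint symmetry available is $(Y_{12},Y_{f1})\overset{d}{=}(-Y_{12},-Y_{f1})$, which sends $e^{\lambda Y_{f1}}$ to $e^{-\lambda Y_{f1}}$, so the cross term does not vanish. (Claim~\ref{restriction1} in Theorem~\ref{theorem_sg} is a nested-conditioning monotonicity statement, not a moment-cancellation device.) The paper instead bounds $W_\tau$ \emph{pointwise} on $B$: Claim~\ref{star:convex} shows that for the specific form $G(x)=G_0/(1+x^{1-\delta})$ and $\lambda=2\ln2/\bar d_t$, the map $y\mapsto G(|y|)(e^{\lambda y/2}-1)$ is monotone on $[0,\bar d_t]$, whence $W_\tau\le x_t:=1+G(\bar d_t)$. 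The per-step ratio becomes $\bar\gamma(\lambda)(1+G(\bar d_t))(1-\tilde G(\tilde d_t))$, and the requirement that this be $\le 1$ reduces to $G(\bar d_t)\lesssim\tilde G(\tilde d_t)$; \emph{this} comparison, not any second-order residual, is the source of the exponent threshold $2/3-\tilde\delta$ and the reason the theorem hypothesizes that particular $G$.

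The initial-window bound also has a gap. The event $B$ imposes no constraint on $Y_{12}(\tau)$ for $\tau<l(t)$, so $e^{\lambda Z_\tau}\le 2$ is unavailable there; and even if it were, the accumulation would be $2^{l(t)}e^{\lambda^2D^2l(t)/2}$, not $\exp(\tfrac58\lambda^2D^2l(t))$. The paper handles this window by stochastic domination (Claim~\ref{claim:starorder}): it compares $Y_{f1}$ to the process with $G\equiv1$, $\tilde G\equiv0$, whose increments are $\tilde n_{f1}(\tau)+\tfrac12\tilde n(\tau-1)$ and hence sub-Gaussian with parameter $\tfrac54D^2$ per step; this is where the coefficient $\tfrac58$ originates.
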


Note that the concentration results in Theorem~\ref{theorem:extendedstar} and Proposition~\ref{prop:extendedstar} assume a stronger influence between a leader and its followers than between the two leaders. This assumption, while stronger than needed for stability, seems reasonable in reality: i.e., a leader influences its followers more than it influences a rival. In our analysis, a more substantial influence by a leader on its follower aids the concentration of their opinion difference despite the perturbations caused by a rival.

The proof technique of Theorem~\ref{theorem:extendedstar} exploits the high probability bound on the opinion difference between agents $1$ and $2$ (obtained in Theorem~\ref{theorem_bounded}) to upper-bound $Y(\tau)$ in the second step. We define an event $B_t=\bigcap\limits_{\tau=l(t)}^t\{|Y(\tau)|\le \bar{d}_\tau\}$ where $l(t)=t^{\xi}$ for some $\xi>0$. Then, we adapt a modification of the Chernoff bound to get the required bound.
\begin{align*}
    \mathbb{P}_0(Y_{f1}(t)\ge \tilde{k}|B_{t-1}) &\le \mathbb{E}_0[e^{\lambda  {Y_{f1}(t)}}| B_{t-1}] \exp{(-\lambda \tilde{k})} \nonumber \\
    \mathbb{P}_0(Y_{f1}(t)\ge \tilde{k})&\le \mathbb{P}_0(Y_{f1}(t)\ge \tilde{k}|B_{t-1})+ \mathbb{P}_0(B_{t-1}^C).
\end{align*}
Details are available in Appendix~\ref{proof:extendedstar}. 
By a similar interpretation of notations, Proposition~\ref{prop:extendedstar} also characterizes $Y_{g2}(t)$, the opinion difference between agent $2$ and its follower $g\in \mathcal{F}_2$.
Indeed, by using triangle inequality, union bound, and the high probability bounds on $Y(t)$, $Y_{f1}(t)$ and $Y_{g2}(t)$, we derive similar concentration bounds on the cross opinion differences, i.e., opinion differences between follower $f\in\mathcal{F}_1$ and agent $2$, $g\in\mathcal{F}_2$ and agent $1$, and followers $f\in \mathcal{F}_1$ and $g\in\mathcal{F}_2$. In particular, the following proposition captures the opinion dynamics between two opposing groups of people in a society.

\begin{proposition}\label{prop:opposing}
Consider multi-agent stable dynamics with bounded noise as described above. Assume $G(x)=\frac{G_0}{1+x^{1-\delta}}$ for some $\delta>0$ and $0<G_0\le1$, and $\tilde{G}(x)\gtrsim \Big(\frac{1}{x}\Big)^{\frac{2}{3}-\tilde{\delta}}$ for some $\tilde{\delta}>0$. Define $\tilde{k}=c~t^{\frac{1}{2}-\tilde{\beta}}$ for $c, \tilde{\beta}>0$. Then, for $t>\Big(\frac{2}{\theta\varsigma}\Big)^\frac{1}{\varsigma}$ and  positive (independent of $t$) constants $\theta$, $c_1\le 5\max \{\frac{4-2G_0}{1-G_0},\frac{4-2\tilde{G}_0}{1-\Tilde{G}_0}\}$ and $c_2>0$,
\begin{align*}
    \mathbb{P}_0(|Y_{fg}(t)|\ge \tilde{k})\le c_1 \exp{(-c_2 t^{\varsigma})}
\end{align*}
where $\varsigma=\min\{\epsilon\xi, \beta-\tilde{\beta}\}$, $0<\xi<1-2\beta$ and $\epsilon\le \frac{\delta}{2}-\beta$.
\end{proposition}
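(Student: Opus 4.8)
The plan is to reduce the cross opinion difference $Y_{fg}(t)=X_f(t)-X_g(t)$ to the three quantities already controlled — $Y_{f1}(t)$, $Y(t)$, and $Y_{g2}(t)$ — through a telescoping identity, and then to combine their tail bounds by a union bound. Writing $X_f-X_g=(X_f-X_1)+(X_1-X_2)+(X_2-X_g)$ gives the decomposition $Y_{fg}(t)=Y_{f1}(t)+Y(t)-Y_{g2}(t)$, and hence $|Y_{fg}(t)|\le |Y_{f1}(t)|+|Y(t)|+|Y_{g2}(t)|$. On the event $\{|Y_{fg}(t)|\ge \tilde{k}\}$ at least one summand must exceed $\tilde{k}/3$, so by the union bound
\begin{align*}
\mathbb{P}_0(|Y_{fg}(t)|\ge \tilde{k}) &\le \mathbb{P}_0(|Y_{f1}(t)|\ge \tilde{k}/3)+\mathbb{P}_0(|Y(t)|\ge \tilde{k}/3)\\ &\quad +\mathbb{P}_0(|Y_{g2}(t)|\ge \tilde{k}/3).
\end{align*}

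Since $\tilde{k}/3=(c/3)\,t^{\frac{1}{2}-\tilde{\beta}}$ has exactly the same form as $\tilde{k}$ (only the leading constant changes), each term can be bounded by a result already proved. The first and third follow from Proposition~\ref{prop:extendedstar}, the third via the symmetric statement for $Y_{g2}(t)$ noted after Theorem~\ref{theorem:extendedstar}; each yields the rate $\exp(-c_2 t^{\varsigma})$ with $\varsigma=\min\{\epsilon\xi,\ \beta-\tilde{\beta}\}$. The middle term is controlled by the two-agent bounded-noise concentration in Proposition~\ref{prop:boundedSimpleStatement}, applied with threshold $\tilde{k}/3$; reading $\tilde{\beta}$ as the threshold exponent there, this gives a rate $\exp(-\theta t^{\epsilon_Y})$ with $\epsilon_Y\le\frac{\delta}{2}-\tilde{\beta}$, where the influence exponent matches because $G(x)=\frac{G_0}{1+x^{1-\delta}}\gtrsim x^{-(1-\delta)}$.

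Next I would reconcile the exponents and constants. Because $\beta>\tilde{\beta}$ and $\epsilon\le\frac{\delta}{2}-\beta$, one checks $\varsigma\le\epsilon\xi\le(\frac{\delta}{2}-\beta)\xi<\frac{\delta}{2}-\tilde{\beta}$, so the leader-leader term decays at least as fast as $t^{\varsigma}$ and does not worsen the overall exponent; the follower terms are therefore the binding ones, and the combined rate is $\exp(-c_2 t^{\varsigma})$ with the same $\varsigma$ as in Proposition~\ref{prop:extendedstar}. For the constants, each follower term contributes at most $2\max\{\frac{4-2G_0}{1-G_0},\frac{4-2\tilde{G}_0}{1-\tilde{G}_0}\}$ and the leader term at most $\frac{4-2G_0}{1-G_0}$, summing to the claimed $c_1\le 5\max\{\frac{4-2G_0}{1-G_0},\frac{4-2\tilde{G}_0}{1-\tilde{G}_0}\}$. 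Taking $t>(2/(\theta\varsigma))^{1/\varsigma}$ places all three sub-bounds in their clean large-$t$ form, absorbing their polynomial prefactors into the exponential.

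The main obstacle is bookkeeping rather than analysis: I must verify that the leader-leader term, which lives on the tighter envelope $t^{\frac{1}{2}-\beta}$, indeed decays no slower than the follower terms once its threshold is rescaled to $\tilde{k}/3\propto t^{\frac{1}{2}-\tilde{\beta}}$, and that the polynomial prefactors of Propositions~\ref{prop:extendedstar} and~\ref{prop:boundedSimpleStatement} can be absorbed uniformly on the common validity range $t>(2/(\theta\varsigma))^{1/\varsigma}$. Crucially, the union bound needs no independence among the three events, so the marginal tails already established suffice and no additional probabilistic structure is required.
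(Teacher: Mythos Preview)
Your proposal is correct and follows exactly the approach the paper indicates: decompose $Y_{fg}(t)=Y_{f1}(t)+Y(t)-Y_{g2}(t)$ via the triangle inequality, apply the union bound, and invoke the already established high-probability bounds on $Y_{f1}(t)$, $Y(t)$, and $Y_{g2}(t)$ from Propositions~\ref{prop:extendedstar} and~\ref{prop:boundedSimpleStatement}. Your bookkeeping on the exponents (showing the leader--leader rate dominates $\varsigma$) and the constant count ($2+2+1=5$ copies of the max) matches the stated $c_1$ bound.
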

We infer from Propositions~\ref{prop:extendedstar} and \ref{prop:opposing} that, in stable dynamics, the opinion difference between a follower and its leader has a stronger concentration than that between the followers of rival groups. Indeed, this is the case in reality.

An essential consideration in the proof techniques is the zero initial opinion difference between a pair of agents. This consideration plays a significant role in establishing symmetric constraints on the opinion difference process while providing analytical guarantees. In general, concentration occurs in stable dynamics (discussed in Sections~\ref{sec:mainResult} and \ref{section:network}) even with non-zero initial opinion differences. However, the rate of concentration will be slower.

\section{Simulations}\label{simulations}

This section presents numerical results to further elucidate the concentration of opinion differences in stable dynamics. We assume that noise in opinion differences follows uniform distribution in $[-D, D]$ for $D=20$.
\vspace{-0.3cm}
\subsection{Opinion dynamics of two agents}\label{fig:abs-two}
We consider a two-agent stable SBC dynamics with $G(x)=\frac{1}{1+x^{1-\delta}}$ for some $\delta>0$.
We infer from the opinion difference model in Section~\ref{sec:mainResult} that $G=0$ represents a symmetric random walk. In that case, for noise with a finite second moment, $\frac{D^2}{3}$, the expected drift in opinion difference after $t$ units of time is $\frac{D}{\sqrt{3}}\sqrt{t}$ \cite[Chap. 2]{lugosi}. Therefore, with $G>0$, it is suggested that the expected drift is $c~t^{\frac{1}{2}-\beta}$ for $c, \beta>0$.

Fig.~\ref{fig:a} illustrates the evolution of opinion difference between the two agents with respect to $\delta$ for $\delta=0.2, 0.5, 0.8$ and $\beta=\frac{\delta}{4}$. $\delta$ characterizes the strength of influence among the agents, i.e., the larger the $\delta$, the more likely the agents influence each other despite large opinion differences. In other words, as $\delta$ increases, their opinion difference likely stays within the envelope of the form $t^{\frac{1}{2}-\beta}$. 

With $\delta=0.5$, we depict a concentration bound for $Y(t)$ (seen in Proposition~\ref{prop:boundedSimpleStatement}) in Fig.~\ref{fig:c} for $c_1=1$ and $c_2=0.5$.
\begin{figure}[hbpt!]
\centering
\captionsetup{justification=centering}
\subfloat[Evolution of opinion difference with $G(x)\gtrsim\frac{1}{x^{1-\delta}}$ for $\delta=0.2, 0.5$ and $0.8$.\label{fig:a}]{
\includegraphics[width=0.4\textwidth]{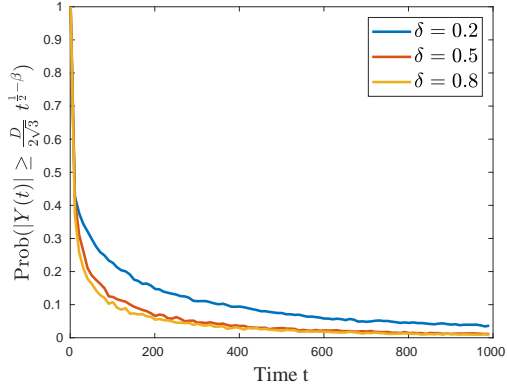}
}\\
\subfloat[Concentration bounds for two-agent dynamics with $G(x)\gtrsim\frac{1}{x^{1-\delta}}$ for $\delta=0.5$.\label{fig:c}]{
\includegraphics[width=0.4\textwidth]{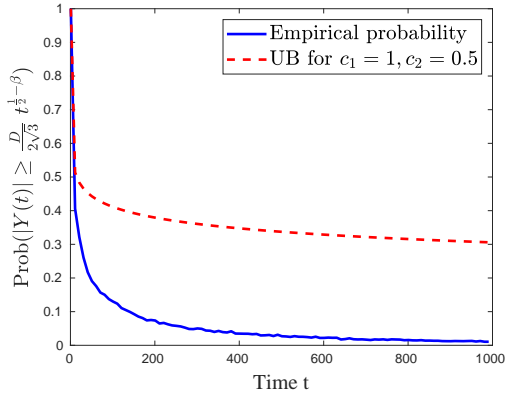}
}\\

\caption{Illustration of the behaviour of two-agent stable SBC dynamics in finite time.}%
\label{fig:wrtdelta}
\end{figure} 

\vspace{-0.3cm}
\subsection{Opinion dynamics on bistar graphs}

\begin{figure}[hbpt!]
\centering
\captionsetup{justification=centering}
\subfloat[Opinion difference between a follower and its leader.\label{fig:network}]{
\includegraphics[width=0.4\textwidth]{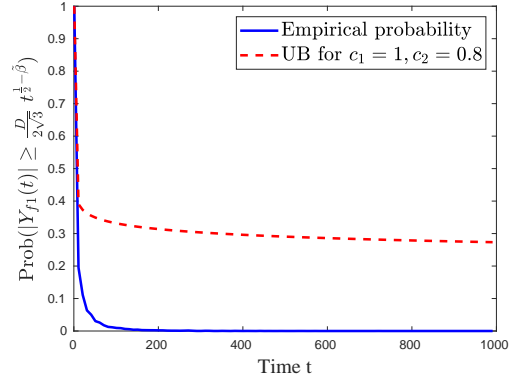}
}\\
\subfloat[Opinion difference between followers of rival groups.\label{fig:network2}]{
\includegraphics[width=0.4\textwidth]{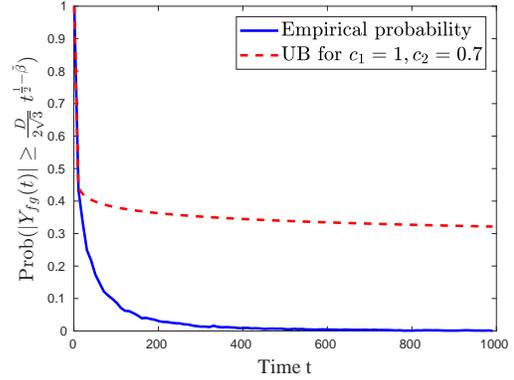}
}\\

\caption{An illustration of concentration bounds in stable multi-agent SBC dynamics.}%
\label{fig:boundsnet}
\end{figure}
For the multi-agent dynamics discussed in Section~\ref{section:network}, we consider that $G(x)=\frac{1}{1+x^{1-\delta}}$ for $\delta=0.5$ and $\tilde{G}(x)=\frac{1}{1+x^{\frac{2}{3}-\tilde{\delta}}}$ for $\tilde{\delta}=0.55$. We choose $\beta=\frac{\delta}{4}$ and $\tilde{\beta}=\frac{\tilde{\delta}}{10}$. In Fig.~\ref{fig:network}, we illustrate how the opinion difference between a leader and their follower evolves when there is a perturbation in the leader's opinion due to a rival leader. In the same figure, we depict our concentration bound in Proposition~\ref{prop:extendedstar} for constants $c_1=1$ and $c_2=0.8$. Likewise, Fig.~\ref{fig:network2} illustrates the concentration of opinion difference between followers $f\in \mathcal{F}_1$ and $g\in \mathcal{F}_2$.

These numerical results suggest that the actual concentration of opinion differences in stable dynamics is much stronger. Furthermore, the analytical bounds hold for all $t$ for suitable choices of $c_1$ and $c_2$ (Fig.~\ref{fig:c} and Fig.~\ref{fig:boundsnet}). 

\section{Concluding Remarks}

In this work, we characterized the finite-time behavior of stable stochastic bounded confidence opinion dynamics on a bistar graph. Specifically, we obtained a high probability bound on the opinion difference between a pair of agents at a finite time. Our proof technique is based on bounding the conditional moment generating function of the opinion difference on a high probability subset of the sample space and adapting the Chernoff bound accordingly. The main motive of this work is to characterize opinion differences in tractable stable systems and extend our analysis to more complicated network structures in the future. Indeed, numerical results suggest scope for tightening the theoretical bounds.

\section*{Acknowledgement}
The first author's work was supported by the Prime Minister's Research Fellows (PMRF) scheme. The work of AC was partially supported by the Department of Science and Technology, Government of India, under Grant INSPIRE/04/2016/001171.

\appendices
\section{Proof of Theorem~\ref{theorem_bounded}}
\label{proof_bound_1}
\begin{proof}[\unskip\nopunct]

Assume that $\tilde{n}(t)$ is bounded in $[-D,D]$. Let $M_{\tilde{n}}(\lambda)$ denote its moment generating function at $\lambda$. By using Chernoff bound, for $\lambda>0$,
\begin{align}\label{cher}
    \mathbb{P}_0(\lvert Y(t) \rvert \geq k) = 2\mathbb{P}_0(Y(t) \geq k)  \leq 2\mathbb{E}_0[e^{\lambda  Y(t) }]\exp{(-\lambda k)}.
\end{align}
We now focus on the expectation term in (\ref{cher}). By the law of iterated expectations,
\begin{align*}
   \mathbb{E}_0[e^{\lambda  Y(t+1) }]& = \mathbb{E}_0[\mathbb{E}[e^{\lambda Y(t+1) } | Y(t)]]\\
   &=\mathbb{E}_0[e^{\lambda (Y(t)+\tilde{n}(t))}(1-G(|Y(t)|))\\
   &\qquad+e^{\lambda \tilde{n}(t)}G(|Y(t)|)].
\end{align*}
For any $t>0$, $\tilde{n}(t)$ and $Y(t)$ are independent. Therefore,
\begin{align*}
    \mathbb{E}_0[e^{\lambda  Y(t+1) }]& = \mathbb{E}[e^{\lambda \tilde{n}(t)}]\mathbb{E}_0[e^{\lambda Y(t)}(1-G(|Y(t)|))+G(|Y(t)|)]\\
    &\leq M_{\tilde{n}}(\lambda)\Big(\mathbb{E}_0[e^{\lambda Y(t)}(1-G(|Y(t)|))]+1\Big).
\end{align*}
As $G(\cdot)$ is decreasing in its argument, in time window $[0,t]$, $G(|Y(t)|) \geq G(Dt)$ for any $t$. We have
\begin{align*}
    \mathbb{E}_0[e^{\lambda  Y(t+1) }]\leq M_{\tilde{n}}(\lambda)\Big(\mathbb{E}_0[e^{\lambda Y(t)}](1-G(Dt))+1\Big).
\end{align*}
This recursive inequality can be expanded to obtain
\begin{align}
    \mathbb{E}_0[e^{\lambda Y(t)}]&\leq M_{\tilde{n}}(\lambda)\nonumber\\&+\sum \limits_{i=0}^{t-2} M_{\tilde{n}}(\lambda)^{t-i} \prod_{j=0}^{t-2-i}(1-G(D(t-1-j))\nonumber\\&+ \prod_{i=0}^{t-1}M_{\tilde{n}} (\lambda)(1-G(Di)).\label{rec_bounded}
\end{align}
We exploit the observation that $G(\cdot)$ is decreasing and simplify the RHS of (\ref{rec_bounded}) term by term. Focusing on the second term,
\begin{align*}
    \sum \limits_{i=0}^{t-2} M_{\tilde{n}}(\lambda)^{t-i} &\prod_{j=0}^{t-2-i}(1-G(D(t-1-j))\\&\leq \sum \limits_{i=0}^{t-2} M_{\tilde{n}}(\lambda)^{t-i} \prod_{j=0}^{t-2-i}(1-G(Dt))\\
    &= \sum \limits_{i=0}^{t-2} M_{\tilde{n}}(\lambda)^{t-i}(1-G(Dt))^{t-i-1}\\
    &= M_{\tilde{n}}(\lambda)\sum \limits_{i=1}^{t-1} \Big[ M_{\tilde{n}}(\lambda)(1-G(Dt))\Big]^{i}.
\end{align*}
Now, the third term: 
\begin{align*}
    \prod_{i=0}^{t-1}M_{\tilde{n}} (\lambda)(1-G(Di))
    &\leq M_{\tilde{n}} (\lambda)^t \prod_{i=0}^{t-1}(1-G(Dt))\\
    &= \Big[M_{\tilde{n}} (\lambda)(1-G(Dt))\Big]^t.
\end{align*}
Putting all these terms together in (\ref{rec_bounded}), we get
\begin{align}
    \mathbb{E}_0[e^{\lambda Y(t)}]&\leq M_{\tilde{n}}(\lambda)\sum \limits_{i=0}^{t-1} \Big[ M_{\tilde{n}}(\lambda)(1-G(Dt))\Big]^{i}\nonumber\\&+\Big[M_{\tilde{n}} (\lambda)(1-G(Dt))\Big]^t. \label{c_rec}
\end{align}
By Hoeffding's Lemma \cite[Sec. 2.3]{lugosi}, 
\begin{align*}
    M_{\tilde{n}}(\lambda) \leq \exp{\Big(\frac{\lambda^2D^2}{2}\Big)}. 
\end{align*}
We know that $\exp{(x)}\le \frac{1}{1-x}$ for all $x\ge0$. This means that $\exp{\Big(\frac{\lambda^2D^2}{2}\Big)}\le \frac{1}{1-\frac{\lambda^2 D^2}{2}}$. Define $\bar{\gamma}(\lambda)=\frac{1}{1-\frac{\lambda^2 D^2}{2}}$. Now, (\ref{c_rec}) becomes 
\begin{align*}
    \mathbb{E}_0[e^{\lambda Y(t)}]&\leq \bar{\gamma}(\lambda)\sum \limits_{i=0}^{t-1} \Big[ \bar{\gamma}(\lambda)(1-G(Dt))\Big]^{i}\nonumber\\&+\Big[\bar{\gamma} (\lambda)(1-G(Dt))\Big]^t.
\end{align*}
Therefore,
\begin{align}
     \mathbb{P}_0(\lvert Y(t) \rvert \geq k) &\leq 2 \Big (\bar{\gamma}(\lambda)\sum \limits_{i=0}^{t-1} \Big[ \bar{\gamma}(\lambda)(1-G(Dt))\Big]^{i}\nonumber\\&+\Big[\bar{\gamma} (\lambda)(1-G(Dt))\Big]^t \Big ) \exp{(-\lambda k)}. \label{c_bound_p}
\end{align}
Based on the stability criterion, we consider that $G(x) \gtrsim \frac{1}{x^{1-\delta}}$ for some $\delta>0$. 
We observe from (\ref{c_bound_p}) that, as $\lambda$  increases, the exponential term decreases, whereas $\bar{\gamma}( \lambda)$ increases. One way to choose an optimal $\lambda$ for a better bound is such that $\bar{\gamma}( \lambda) (1-G(Dt))\le1$. This implies that $\frac{\lambda^2 D^2}{2}\le G(Dt)$ yielding $\lambda \le \frac{\sqrt{2 G(Dt)}}{D}$. We assume $\lambda = \frac{\sqrt{2 G(Dt)}}{D}$. With this choice of $\lambda$, (\ref{c_bound_p}) can be rewritten as
\begin{align}
     \mathbb{P}_0(\lvert Y(t) \rvert \geq k) &\leq 2 \Big (\frac{t}{1-G(Dt)}+1 \Big ) \exp{\Big(-\frac{\sqrt{2 G(Dt)}}{D} k\Big)}. \label{c_bound_p1}
\end{align}
Note that (\ref{c_bound_p1}) holds for all $t\ge 0$. Clearly, the bound in (\ref{c_bound_p1}) depends on the functional form of $G(\cdot)$. In particular, (\ref{c_bound_p1}) can be written in a compact form as follows for all $t>0$ such that $\frac{\sqrt{2G(Dt)}}{D}k\ge \theta t^\epsilon$. For $\epsilon \le \frac{\delta}{2}-\beta$ and positive constants $c_1\le \frac{4-2G_0}{1-G_0}$ and $\theta$, 
\begin{align}
     \mathbb{P}_0(\lvert Y(t) \rvert \geq k) &\leq c_1 t \exp{(-\theta t^\epsilon)}. \label{c_bound_p2}
\end{align}
Further, for $t>\Big(\frac{1}{\theta~\epsilon}\Big)^\frac{1}{\epsilon}$ and $c_2>0$, (\ref{c_bound_p2}) can be refined into
\begin{align}
     \mathbb{P}_0(\lvert Y(t) \rvert \geq k) &\leq c_1 \exp{(-c_2 t^\epsilon)}. \label{c_bound_p3}
\end{align}

Recall that $k=c~t^{\frac{1}{2}-\beta}$ for  $c, \beta>0$. With $G(x) \gtrsim \frac{1}{x^{1-\delta}}$ for some $\delta>0$ and $\epsilon\le \frac{\delta}{2}-\beta$, we observe that (\ref{c_bound_p3}) decreases with $t$. This, in turn, is not only in accordance with the stability results in the literature but also proves our intuition for $k$. Clearly, this technique cannot be directly extended to the case where $G(x) \gtrsim \frac{1}{x^{2-\delta}}$ as it will be evident from the available choices of $\epsilon$.
\end{proof}

\section{Proof of Theorem~\ref{theorem_sg}}
\label{proof_bound_sg}
\begin{proof}[\unskip\nopunct]
We use the following Lemmas~\ref{conditional_P} and \ref{complement_sg} to prove Theorem~ \ref{theorem_sg}. Proof of the lemmas are detailed in Appendix~\ref{section:conditional_P} and \ref{section:complement_sg} respectively.

Recall the event $A_t=\bigcap\limits_{\tau=h(t)}^t \{|Y(\tau)| \leq d_\tau\}$ where $h(t)=t^\zeta, \; 0<\zeta<1$, and $d_{\tau}=D~\tau^{\frac{1}{2}+\beta'}$ for some $\beta'>0$.
\begin{lemma}\label{conditional_P}
With $G(x) \gtrsim \frac{1}{x^{2-\delta}}$ for some $\delta>0$,
\begin{align*}
    &\mathbb{P}_0(|Y(t)| \geq k | A_{t-1})\\ &\leq 2\Big[\frac{t-h(t)}{1-G(d_t)}+\exp{\Big(G(d_t)h(t)\Big)}\Big] \exp{\Big(- \frac{\sqrt{2 G(d_t)}}{\sigma}k\Big)}.
\end{align*}
\end{lemma}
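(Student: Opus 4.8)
The plan is to mirror the three-step moment-generating-function (MGF) argument used for Theorem~\ref{theorem_bounded}, with the deterministic bound $|Y(\tau)|\le D\tau$ replaced by the constraint $|Y(\tau)|\le d_\tau$ that is enforced on the conditioning event $A_{t-1}$. The starting point is symmetry: since $Y(0)=0$, the noise $\tilde n(\cdot)$ is symmetric, and $G$ depends on its argument only through $|Y|$, the law of the path $\{Y(\tau)\}$ is invariant under $Y\mapsto -Y$. Because $A_{t-1}$ is defined purely through the magnitudes $\{|Y(\tau)|\}$, it inherits this invariance, so conditional on $A_{t-1}$ the variable $Y(t)$ is symmetric about $0$. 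This gives at once $\mathbb{P}_0(|Y(t)|\ge k\mid A_{t-1})=2\,\mathbb{P}_0(Y(t)\ge k\mid A_{t-1})$, together with the conditional Chernoff bound $\mathbb{P}_0(Y(t)\ge k\mid A_{t-1})\le \mathbb{E}_0[e^{\lambda Y(t)}\mid A_{t-1}]\,e^{-\lambda k}$, so it suffices to upper bound the conditional MGF $\mathbb{E}_0[e^{\lambda Y(t)}\mid A_{t-1}]$ and then optimize $\lambda>0$.

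For the recursion I would condition on $Y(t)$ and use independence of $\tilde n(t)$ to get $\mathbb{E}_0[e^{\lambda Y(t+1)}\mid A_t]\le M_{\tilde n}(\lambda)\big(\mathbb{E}_0[(1-G(|Y(t)|))e^{\lambda Y(t)}\mid A_t]+1\big)$, where the $+1$ uses $G(|Y(t)|)\le 1$. On $A_t$ we have $|Y(t)|\le d_t$, hence $G(|Y(t)|)\ge G(d_t)$ by monotonicity, so $(1-G(|Y(t)|))e^{\lambda Y(t)}\le (1-G(d_t))e^{\lambda Y(t)}$ pointwise. The crucial move is to swap the conditioning from $A_t$ back to $A_{t-1}$: since $A_t=A_{t-1}\cap\{|Y(t)|\le d_t\}$ and $Y(t)$ is symmetric given $A_{t-1}$, I would split $e^{\lambda Y(t)}=\cosh(\lambda Y(t))+\sinh(\lambda Y(t))$, note that the odd $\sinh$ part integrates to zero under any symmetric conditioning, and observe that $\cosh(\lambda|Y(t)|)$ is increasing in $|Y(t)|$ so that the extra truncation $\{|Y(t)|\le d_t\}$ can only lower its expectation. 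This yields $\mathbb{E}_0[e^{\lambda Y(t)}\mid A_t]\le \mathbb{E}_0[e^{\lambda Y(t)}\mid A_{t-1}]$ and hence the clean recursion $a_{t+1}\le M_{\tilde n}(\lambda)\big((1-G(d_t))\,a_t+1\big)$, where $a_\tau:=\mathbb{E}_0[e^{\lambda Y(\tau)}\mid A_{\tau-1}]$.

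For the remaining steps I would unroll the recursion after splitting the horizon at $h(t)$. On the window $[h(t),t]$ the constraint is active and each factor contracts by $1-G(d_\tau)\le 1-G(d_t)$; on the burn-in $[0,h(t)]$ no constraint is available and I would only use $1-G(|Y(\tau)|)\le 1$. Invoking the sub-Gaussian bound $M_{\tilde n}(\lambda)\le e^{\lambda^2\sigma^2/2}$ and choosing $\lambda=\sqrt{2G(d_t)}/\sigma$ so that $\lambda^2\sigma^2/2=G(d_t)$, the burn-in contributes the factor $e^{G(d_t)h(t)}$ (each of its $h(t)$ uncontracted steps multiplies by $e^{G(d_t)}$), while inside the window the bound $M_{\tilde n}(\lambda)\le \tfrac{1}{1-G(d_t)}$ together with the per-step multiplier $\tfrac{1}{1-G(d_t)}(1-G(d_t))=1$ collapses the geometric sum over its $\le t-h(t)$ terms to $\tfrac{t-h(t)}{1-G(d_t)}$ (the elementary inequality $e^x(1-x)\le 1$ underlies both estimates). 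Combining gives $\mathbb{E}_0[e^{\lambda Y(t)}\mid A_{t-1}]\le \tfrac{t-h(t)}{1-G(d_t)}+e^{G(d_t)h(t)}$; substituting this, the factor $2$ from symmetry, and $e^{-\lambda k}=\exp(-\tfrac{\sqrt{2G(d_t)}}{\sigma}k)$ into the Chernoff bound produces the stated inequality.

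I expect the event-swap $\mathbb{E}_0[e^{\lambda Y(t)}\mid A_t]\le \mathbb{E}_0[e^{\lambda Y(t)}\mid A_{t-1}]$ to be the main obstacle: without it the recursion cannot close, because controlling $G(|Y(t)|)$ forces conditioning on $A_t$, whereas the recursion must feed back a term conditioned on the past event $A_{t-1}$. Making this rigorous relies on the conditional-symmetry and $\cosh$-monotonicity argument above, and on verifying that the conditioning events are genuinely nested — in particular that the slowly growing window start $h(\cdot)$ is handled (e.g., by keeping the lower endpoint fixed at $h(t)$ throughout the recursion for a fixed terminal time $t$) so that $A_{t-1}$ and $\{|Y(t)|\le d_t\}$ reconstruct (a subset of) $A_t$. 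The only other delicate point is the clean additive separation of the burn-in factor from the windowed geometric sum, which is purely a matter of the choice of $\lambda$ and the inequality $e^x(1-x)\le 1$.
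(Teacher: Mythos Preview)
Your proposal is correct and follows essentially the same route as the paper: conditional symmetry plus Chernoff, the one-step recursion together with the event-swap $\mathbb{E}_0[e^{\lambda Y(t)}\mid A_t]\le\mathbb{E}_0[e^{\lambda Y(t)}\mid A_{t-1}]$ via the $\cosh$-monotonicity argument (this is exactly the paper's Claim~\ref{restriction1}), unrolling on the window $[h(t),t]$ with $1-G(d_\tau)\le 1-G(d_t)$, and the same choice $\lambda=\sqrt{2G(d_t)}/\sigma$ that collapses the per-step multiplier to $1$. The only point of difference is the burn-in: the paper does \emph{not} continue the recursion on $[0,h(t)]$ with $1-G\le 1$ but instead bounds the base term $\mathbb{E}_0[e^{\lambda Y(h(t))}]$ in one shot by stochastic domination of $|Y(\cdot)|$ by the free random walk (Claim~\ref{ordering_sg}), which gives exactly $e^{\lambda^2\sigma^2 h(t)/2}=e^{G(d_t)h(t)}$; your stated route of pushing the recursion through the burn-in with $1-G\le 1$ would retain the ``$+1$'' contributions there and produce an additional additive geometric piece rather than the clean single factor, so to land precisely on the claimed bound you should invoke the stochastic-domination step at $\tau=h(t)$ instead.
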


\begin{lemma} \label{complement_sg}
Recall that $d_{\tau}=D~\tau^{\frac{1}{2}+\beta'}$ for some $\beta'>0$ and $D > 0$. Then, for $c'>0$,
\begin{align*}
    \mathbb{P}_0(A_{t-1}^\mathrm{C})\leq 2(t-h(t))\exp{(-c'h(t)^{2 \beta'})}.
\end{align*}
\end{lemma}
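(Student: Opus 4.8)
The plan is to control $\mathbb{P}_0(A_{t-1}^{\mathrm C})$ by a union bound over the finitely many time indices appearing in the intersection that defines $A_{t-1}$, and then to bound each individual crossing probability $\mathbb{P}_0(|Y(\tau)| > d_\tau)$ by comparing the reset walk $Y(\tau)$ with an ordinary sum of i.i.d.\ sub-Gaussian increments. Writing $A_{t-1}^{\mathrm C} = \bigcup_\tau \{|Y(\tau)| > d_\tau\}$, where $\tau$ ranges over the at most $t-h(t)$ indices from $h(t)$ up to $t-1$, subadditivity gives
\[
\mathbb{P}_0(A_{t-1}^{\mathrm C}) \le \sum_{\tau} \mathbb{P}_0(|Y(\tau)| > d_\tau).
\]

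First I would invoke the stochastic domination already recorded in Section~\ref{sec:subG}: since $Y(0)=0$ and at each step the process either accumulates a fresh symmetric noise increment or resets to one, $Y(\tau)$ is a partial sum of a contiguous block of at most $\tau$ i.i.d.\ symmetric increments, so $|Y(\tau)|$ is stochastically dominated by $|S(\tau)|$ with $S(\tau)=\sum_{s=1}^\tau \tilde{n}(s)$. This yields $\mathbb{P}_0(|Y(\tau)|>d_\tau) \le \mathbb{P}(|S(\tau)|>d_\tau)$ for every $\tau$. Next, because the $\tilde{n}(s)\in\mathcal{SG}(\sigma^2)$ are i.i.d., their sum satisfies $S(\tau)\in\mathcal{SG}(\tau\sigma^2)$, and the standard two-sided sub-Gaussian tail bound gives
\[
\mathbb{P}(|S(\tau)|>d_\tau) \le 2\exp\!\Big(-\frac{d_\tau^2}{2\tau\sigma^2}\Big).
\]

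Substituting $d_\tau = D\,\tau^{\frac{1}{2}+\beta'}$ collapses the exponent to $\frac{D^2}{2\sigma^2}\tau^{2\beta'}$, so setting $c'=\frac{D^2}{2\sigma^2}$ and using $\tau \ge h(t)$ (hence $\tau^{2\beta'}\ge h(t)^{2\beta'}$) bounds each summand by $2\exp(-c'h(t)^{2\beta'})$. Summing the at most $t-h(t)$ terms then produces exactly $2(t-h(t))\exp(-c'h(t)^{2\beta'})$, which is the claim.

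The delicate point is keeping the prefactor linear in $t-h(t)$. A naive argument that first splits on the random last-reset time would replace each crossing probability by a sum over reset times of partial-sum tails, injecting a spurious extra factor of $\tau$ and spoiling the bound; the clean stochastic-domination comparison to the single $\tau$-step walk $S(\tau)$ is precisely what avoids this. Making that domination fully rigorous is the only subtle step — it rests on the symmetry of $\tilde{n}$ together with the fact that inserting independent symmetric increments only inflates the absolute value stochastically — but with the domination in hand (as asserted in the main text) everything else is a routine union bound and sub-Gaussian tail estimate.
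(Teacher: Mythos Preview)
Your proposal is correct and follows essentially the same route as the paper: De~Morgan plus union bound over $\tau\in\{h(t),\dots,t-1\}$, stochastic domination of $|Y(\tau)|$ by the absolute value of the free sub-Gaussian sum $|S(\tau)|$, the standard two-sided sub-Gaussian tail, and then monotonicity in $\tau$ to replace $\tau^{2\beta'}$ by $h(t)^{2\beta'}$. Your explicit identification $c'=\tfrac{D^2}{2\sigma^2}$ is exactly the constant implicit in the paper's bound.
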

By law of total probability,
\begin{equation*}
    \begin{split}
    \mathbb{P}_0(|Y(t)| \geq k)&=\mathbb{P}_0(|Y(t)| \geq k|A_{t-1})\mathbb{P}_0(A_{t-1})+\\&\quad\; \mathbb{P}_0(|Y(t)| \geq k|A_{t-1}^\mathrm{C})\mathbb{P}_0(A_{t-1}^\mathrm{C})\\&\leq \mathbb{P}_0(|Y(t)| \geq k|A_{t-1})+\mathbb{P}_0(A_{t-1}^\mathrm{C}).
\end{split}
\end{equation*}

From Lemma~\ref{conditional_P} and Lemma~\ref{complement_sg},
\begin{align*}
     \mathbb{P}_0(|Y(t)| \geq k)&\leq 2(t-h(t))\exp{(-c'h(t)^{ 2\beta'})}+2\Big[\frac{t-h(t)}{1-G(d_t)}\nonumber\\ &+\exp{\Big(G(d_t)h(t)\Big)}\Big] \exp{\Big(- \frac{\sqrt{2 G(d_t)}}{\sigma}k\Big)}.
\end{align*}
Now, for $\zeta\le1-\frac{\delta}{2}$ and $\epsilon\le\frac{\delta}{6}-\frac{2\beta}{3}$, the final bound can be simplified to yield the result in Proposition~\ref{prop:mainresult}.
\end{proof}

\section{Proof of Lemma~\ref{conditional_P}}\label{section:conditional_P}
\begin{proof}[\unskip\nopunct]
The following Claims~\ref{restriction1} and \ref{ordering_sg} are useful to prove Lemma~\ref{conditional_P}.

\begin{claim}\label{restriction1}
Recall the event $A_t$. Then, for $\lambda>0$,
\begin{align*}
    \mathbb{E}_0[\exp{(\lambda Y(t))}|A_t] \leq\mathbb{E}_0[\exp{(\lambda Y(t))}|A_{t-1}]. 
\end{align*}
\end{claim}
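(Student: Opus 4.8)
The plan is to exploit the fact that passing from $A_{t-1}$ to $A_t$ only adds the single, \emph{symmetric} constraint $\{|Y(t)|\le d_t\}$ on the current coordinate, i.e. $A_t = A_{t-1}\cap\{|Y(t)|\le d_t\}$ (reading $A_{t-1}$ as the same intersection with the final time-$t$ constraint removed), and that under $\mathbb{P}_0$ the opinion-difference process is invariant under the global sign flip $Y\mapsto -Y$. The strategy is to rewrite the conditional moment generating function as the expectation of an \emph{even}, monotone function of $Y(t)$, and then observe that conditioning on $A_t$ is precisely conditioning on that function being small, so the claim reduces to the elementary fact that conditioning a random variable on lying below a threshold cannot increase its mean.

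First I would establish the sign-flip symmetry. Since $Y(0)=0$, the influence probability $G(|y|)$ depends on $y$ only through $|y|$, and the noise $\tilde n(\tau)$ is symmetric about $0$, the one-step transition kernel commutes with $y\mapsto -y$; by induction the law of the path $(Y(0),\dots,Y(t))$ under $\mathbb{P}_0$ equals that of $(-Y(0),\dots,-Y(t))$. Because each conditioning event $\{|Y(\tau)|\le d_\tau\}$ is invariant under this flip, both $A_{t-1}$ and $A_t$ are flip-invariant, and hence the conditional laws of $Y(t)$ given $A_{t-1}$ and given $A_t$ are each symmetric about $0$.

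Next, using symmetry, for $\lambda>0$ I would replace the exponential by a hyperbolic cosine: $\mathbb{E}_0[e^{\lambda Y(t)}\mid A_{t-1}]=\mathbb{E}_0[e^{-\lambda Y(t)}\mid A_{t-1}]=\mathbb{E}_0[\cosh(\lambda Y(t))\mid A_{t-1}]$, and identically with $A_t$ in place of $A_{t-1}$. Writing $W=\cosh(\lambda Y(t))$, observe that $W$ is an increasing function of $|Y(t)|$, so the extra constraint defining $A_t$ is exactly $\{W\le \cosh(\lambda d_t)\}$. The claim then follows from the general truncation inequality $\mathbb{E}[W\mid W\le w_0]\le \mathbb{E}[W]$ (valid whenever $\mathbb{P}(W\le w_0)>0$), applied under the conditional measure $\mathbb{P}_0(\cdot\mid A_{t-1})$ with $w_0=\cosh(\lambda d_t)$: the left-hand side equals $\mathbb{E}_0[\cosh(\lambda Y(t))\mid A_t]=\mathbb{E}_0[e^{\lambda Y(t)}\mid A_t]$ and the right-hand side equals $\mathbb{E}_0[e^{\lambda Y(t)}\mid A_{t-1}]$. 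The truncation inequality itself is immediate from decomposing $\mathbb{E}[W]$ over $\{W\le w_0\}$ and $\{W>w_0\}$, since the conditional mean on the latter event is at least $w_0\ge \mathbb{E}[W\mid W\le w_0]$.

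The step I expect to require the most care is the symmetry reduction rather than the truncation bound, which is routine. Two ingredients must hold simultaneously: the conditioning event must be sign-flip invariant (which is exactly why the \emph{symmetric} window $\{|Y(\tau)|\le d_\tau\}$, rather than a one-sided constraint, is essential, and why the assumed symmetry of $\tilde n(t)$ and the initialization $Y(0)=0$ are both used), and the integrand after conditioning must be a monotone even function of $Y(t)$, which forces the rewriting of $e^{\lambda Y(t)}$ as $\cosh(\lambda Y(t))$ before truncating. The only remaining bookkeeping concerns the dependence of the lower window index $h(t)=t^{\zeta}$ on $t$; one keeps this index fixed across the one-step passage so that $A_t$ and $A_{t-1}$ differ solely in the current coordinate, which is all the argument above uses.
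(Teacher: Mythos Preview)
Your proposal is correct and follows essentially the same route as the paper: both arguments use the sign-flip symmetry (from $Y(0)=0$, symmetric noise, and symmetric constraints) to rewrite the conditional moment generating function as an expectation of $\cosh(\lambda Y(t))$, and then use that passing from $A_{t-1}$ to $A_t$ only truncates $|Y(t)|$ from above, which cannot increase the expectation of a function increasing in $|Y(t)|$. Your write-up is in fact more explicit than the paper's at the final step---the paper simply asserts that ``$f_{Y(t)|A_t}$ is a restriction of $f_{Y(t)|A_{t-1}}$ on smaller $|Y(t)|$'' and concludes, whereas you spell out the truncation inequality $\mathbb{E}[W\mid W\le w_0]\le\mathbb{E}[W]$ and its one-line justification---and you correctly flag the bookkeeping issue about keeping the lower index $h(t)$ fixed across the one-step recursion.
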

\begin{proof}[Proof of Claim~\ref{restriction1}]
Let the conditional probability of $Y(t)$ given an arbitrary event $A$ be denoted as $f_{{Y(t)}|{A}}(.)$. Note that the constraints $\{\lvert Y(\tau) \rvert \leq d_\tau\}$ are symmetric for all $\tau$. With $Y(0)=0$, symmetric noise model and symmetric constraints, we observe that $f_{{Y(t)}|{A_{t}}}(.)$ and $f_{{Y(t)}|{A_{t-1}}}(.)$ are symmetric about zero. Hence,
\begin{align}
    \mathbb{E}_0[\exp{(\lambda Y(t))}|A_{t-1}]&= \int_0^\infty \exp{(\lambda y})f_{{Y(t)}|{A_{t-1}}}(y)dy\nonumber\\&\quad+\int_0^\infty \exp{(-\lambda y})f_{{Y(t)}|{A_{t-1}}}(y)dy\nonumber\\
    &=2 \int_0^\infty \cosh{(\lambda y})f_{{Y(t)}|{A_{t-1}}}(y)dy. \label{At-1}
\end{align}
Similarly, 
\begin{align}
    \mathbb{E}_0[\exp{(\lambda Y(t))}|A_{t}]=2 \int_0^\infty \cosh{(\lambda y})f_{{Y(t)}|{A_{t}}}(y)dy. \label{At}
\end{align}

Note that $\cosh{(\lambda y)}$ increases with $|y|$ for any $\lambda>0$ and  $f_{{Y(t)}|{A_{t}}}(.)$ is a restriction of $f_{{Y(t)}|{A_{t-1}}}(.)$ on smaller $|Y(t)|$. Therefore, from (\ref{At-1}) and (\ref{At}), we have the result.
\end{proof}

\begin{claim}\label{ordering_sg}
Let $\{Y'(t),t\ge 0\}$ be the process of opinion difference for a two-agent system with $G=0$. That is,
\begin{align*}
    Y'(t+1)=Y'(t)+\tilde{n}(t).
\end{align*}
Then, for any $t \ge 0$,
\begin{align*}
    \mathbb{E}_0[\exp{(\lambda Y(t))]} \leq \mathbb{E}_0[\exp{(\lambda Y'(t))]}\leq \exp{\Big(\frac{\lambda^2\sigma^2 t}{2}\Big)}.
\end{align*}
\end{claim}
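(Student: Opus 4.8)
The plan is to establish the two inequalities separately. The right-hand inequality is the standard sub-Gaussian control of a random walk: since $Y'(0)=0$, we have $Y'(t)=\sum_{\tau=0}^{t-1}\tilde{n}(\tau)$, a sum of $t$ independent copies of $\tilde{n}\in\mathcal{SG}(\sigma^2)$. Independence factorizes the moment generating function, and applying the defining bound $\mathbb{E}[e^{\lambda\tilde{n}(\tau)}]\le e^{\lambda^2\sigma^2/2}$ to each of the $t$ factors yields $\mathbb{E}_0[e^{\lambda Y'(t)}]\le e^{\lambda^2\sigma^2 t/2}$. This is immediate and needs no structural input beyond the sub-Gaussian assumption.

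For the left-hand inequality I would induct on $t$, reusing the one-step identity already derived in the proof of Theorem~\ref{theorem_bounded},
\[
\mathbb{E}_0[e^{\lambda Y(t+1)}]=M_{\tilde{n}}(\lambda)\,\mathbb{E}_0\!\big[e^{\lambda Y(t)}(1-G(|Y(t)|))+G(|Y(t)|)\big],
\]
which holds for any noise law, as its derivation uses only the independence of $\tilde{n}(t)$ and $Y(t)$. The same identity with $G\equiv 0$ gives $\mathbb{E}_0[e^{\lambda Y'(t+1)}]=M_{\tilde{n}}(\lambda)\,\mathbb{E}_0[e^{\lambda Y'(t)}]$. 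The base case $t=0$ is trivial since $Y(0)=Y'(0)=0$ forces both transforms to equal $1$. Rewriting the inner expectation as $\mathbb{E}_0[e^{\lambda Y(t)}]-\mathbb{E}_0\!\big[G(|Y(t)|)(e^{\lambda Y(t)}-1)\big]$, the inductive step reduces to showing the correction term is nonnegative; granting that, the identity gives $\mathbb{E}_0[e^{\lambda Y(t+1)}]\le M_{\tilde{n}}(\lambda)\,\mathbb{E}_0[e^{\lambda Y(t)}]$, and the induction hypothesis $\mathbb{E}_0[e^{\lambda Y(t)}]\le\mathbb{E}_0[e^{\lambda Y'(t)}]$ propagates the bound to time $t+1$.

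The crux, and the only delicate point, is the nonnegativity of $\mathbb{E}_0\!\big[G(|Y(t)|)(e^{\lambda Y(t)}-1)\big]$. This is not termwise nonnegative, since $e^{\lambda y}-1<0$ for $y<0$ when $\lambda>0$, so I would exploit the symmetry of the law of $Y(t)$ about zero. That symmetry holds because $Y(0)=0$, the noise $\tilde{n}$ is symmetric, and $G(|\cdot|)$ is even, which is exactly the property invoked in Claim~\ref{restriction1}; formally one verifies by induction that each update preserves symmetry. Pairing $y$ with $-y$ and using that $G(|\cdot|)$ is even converts the expectation into $2\int_0^\infty G(y)\big(\cosh(\lambda y)-1\big)f_{Y(t)}(y)\,dy$, where $f_{Y(t)}$ is the symmetric density of $Y(t)$. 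Since $\cosh(\lambda y)\ge 1$, $G\ge 0$, and $f_{Y(t)}\ge 0$, the integrand is nonnegative, which closes the argument. I expect no further obstacles: this is a clean Laplace-transform (convex-ordering) comparison in which the jumps toward $\tilde{n}(t)$ can only contract the distribution relative to the free random walk, and the symmetry of $Y(t)$ is what pins down the sign of the jump correction.
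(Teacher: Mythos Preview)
Your proof is correct and takes a genuinely different route from the paper. The paper's argument invokes stochastic ordering: it asserts $Y(t)\le_{st}Y'(t)$ and then applies monotonicity of $e^{\lambda\cdot}$. (Strictly, what holds and what the paper uses elsewhere is $|Y(t)|\le_{st}|Y'(t)|$; combined with symmetry this gives the MGF comparison via $\mathbb{E}_0[e^{\lambda Y}]=\mathbb{E}_0[\cosh(\lambda|Y|)]$.) You instead stay with the one-step recursion already derived for Theorem~\ref{theorem_bounded}, isolate the correction term $\mathbb{E}_0[G(|Y(t)|)(e^{\lambda Y(t)}-1)]$, and show its nonnegativity by the symmetry of the law of $Y(t)$---effectively a convex-ordering step done by hand. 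The paper's route is shorter once the stochastic dominance is granted, but that dominance itself requires a coupling argument the paper does not spell out. Your inductive approach is more elementary and self-contained, reuses only ingredients already in the paper, and makes explicit exactly where the symmetry assumption on $\tilde{n}$ enters.
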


The proof of Claim~\ref{ordering_sg} uses the notion of stochastic ordering and so, we overview it below. For a random variable $X$, let $F_X$ and $\Bar{F}_X$ represent its distribution function and tail distribution respectively, i.e. for any $x\in \mathbb{R}$,
\begin{align*}
    F_X(x)=\mathbb{P}(X \le x),\\
    \bar{F}_X(x)=\mathbb{P}(X > x).
\end{align*}

\begin{definition}[{Stochastic Ordering \cite[Sec. 1.2]{Stoyan1983}}]\label{s_ordering}
Given two random variables $X$ and $Y$ taking values in $\mathbb{R}$, we denote $X \le_{st} Y$ if 
\begin{align*}
    F_X(l) \ge F_Y(l) \; \forall l \in \mathbb{R}
\end{align*}
or equivalently, if
\begin{align*}
    \bar{F}_X(l) \le \bar{F}_Y(l) \; \forall l \in \mathbb{R}.    
\end{align*}
Also, if $X \le_{st} Y$, then $\mathbb{E}[f(X)]\le\mathbb{E}[f(Y)]$ for all non-decreasing functions $f$ for which the expectations exist. 
\end{definition}

\begin{proof}[Proof of Claim~\ref{ordering_sg}]
Clearly, $Y(t) \leq_{st} Y'(t)$. As exponential function is non-decreasing for $\lambda>0$, by Definition~\ref{s_ordering},  $\mathbb{E}_0[\exp{(\lambda Y(t)) }]\leq \mathbb{E}_0[\exp{(\lambda Y'(t))}]$. Since $Y(0)=0$, $Y'(t)=\sum_{\tau=0}^{t-1} \tilde{n}(\tau).$
We also assumed that  $\tilde{n}(t) \in \mathcal{SG}(\sigma^2)$ for all $t$. Therefore, $Y'(t) \in \mathcal{SG}(\sigma^2 t)$. That is,
\begin{align*}
    \mathbb{E}_0[\exp{\lambda Y'(t)}]=\mathbb{E}_0[\exp\Big({\lambda  \sum_{\tau=0}^{t-1} \tilde{n}(\tau)}\Big)]\leq \exp{\Big(\frac{\lambda^2 \sigma^2 t}{2}\Big)}.
\end{align*}
\end{proof}

Along with these claims,  we use the proof technique of Theorem~\ref{theorem_bounded} to bound the conditional probability $\mathbb{P}_0(\lvert Y(t) \rvert \geq k|A_{t-1})$. By the law of iterated expectations,
\begin{align}
   \mathbb{E}_0[e^{\lambda  Y(t+1) }]& = \mathbb{E}_0[\mathbb{E}[e^{\lambda Y(t+1) } | Y(t)]]\nonumber\\
   & =M_{\tilde{n}}(\lambda)\mathbb{E}_0[e^{\lambda Y(t)}(1-G(|Y(t)|))+G(|Y(t)|)]. \label{unconditional_sg}
\end{align}
Now, from (\ref{unconditional_sg}),
\begin{align*}
    \mathbb{E}_0[e^{\lambda  Y(t+1)}|A_t]& =M_{\tilde{n}}(\lambda)\mathbb{E}_0[e^{\lambda Y(t)}(1-G(|Y(t)|))\\ &\quad\quad\quad\quad+G(|Y(t)|)|A_t]\\
    &\leq M_{\tilde{n}}(\lambda)\mathbb{E}_0[e^{\lambda Y(t)}(1-G(|Y(t)|))+1|A_t]\\
    & \labelrel\leq{g_dec} M_{\tilde{n}}(\lambda)(\mathbb{E}_0[e^{\lambda Y(t)}|A_t](1-G(d_t))+1)\\
    & \labelrel\leq{ineq_lemma} M_{\tilde{n}}(\lambda)(\mathbb{E}_0[e^{\lambda Y(t)}|A_{t-1}](1-G(d_t))+1), 
\end{align*}
where inequality~\eqref{g_dec} holds as $G(\cdot)$ is decreasing in its argument and inequality~\eqref{ineq_lemma} follows Claim~\ref{restriction1}.
So, we have a recursive inequality:
\begin{align*}
\mathbb{E}_0[e^{\lambda  
Y(t+1)}|A_t]\leq M_{\tilde{n}}(\lambda)(1+\mathbb{E}_0[e^{\lambda Y(t)}|A_{t-1}](1-G(d_t))),
\end{align*}
which upon expansion gives an upper bound for $\mathbb{E}_0[e^{\lambda Y(t)}|A_{t-1}]$ as follows.
\begin{align*}
    \mathbb{E}_0[e^{\lambda Y(t)}&|A_{t-1}]\nonumber\leq M_{\tilde{n}}(\lambda)\\&+\sum \limits_{i=h(t)}^{t-2} M_{\tilde{n}}(\lambda)^{t-i} \prod_{j=0}^{t-2-i}(1-G(d_{t-1-j}))\nonumber\\&+\mathbb{E}_0[\exp{(\lambda Y(h(t)))}|A_{h(t)}] \prod_{i=h(t)}^{t-1}M_{\tilde{n}} (\lambda)(1-G(d_i)). 
\end{align*}
Thus, from Claim~\ref{ordering_sg}, we have
\begin{align}
    \mathbb{E}_0[e^{\lambda Y(t)}&|A_{t-1}]\nonumber\leq M_{\tilde{n}}(\lambda)\\&+\sum \limits_{i=h(t)}^{t-2} M_{\tilde{n}}(\lambda)^{t-i} \prod_{j=0}^{t-2-i}(1-G(d_{t-1-j}))\nonumber\\
    &+\mathbb{E}_0[\exp{(\lambda Y'(h(t)))}] \prod_{i=h(t)}^{t-1}M_{\tilde{n}} (\lambda)(1-G(d_i)).\label{rec_sg_main} 
\end{align}
Simplifying the second term in (\ref{rec_sg_main}),
\begin{align*}
    \sum \limits_{i=h(t)}^{t-2} M_{\tilde{n}}(\lambda)^{t-i} &\prod_{j=0}^{t-2-i}(1-G(d_{t-1-j}))\\&\leq
    \sum \limits_{i=h(t)}^{t-2} M_{\tilde{n}}(\lambda)^{t-i} \prod_{j=0}^{t-2-i}(1-G(d_{t}))\\
    &= \sum \limits_{i=h(t)}^{t-2} M_{\tilde{n}}(\lambda)^{t-i}(1-G(d_t))^{t-i-1}\\
    &= M_{\tilde{n}}(\lambda)\sum \limits_{i=h(t)}^{t-2} \Big[M_{\tilde{n}}(\lambda)(1-G(d_t))\Big]^{t-i-1}\\
    &=M_{\tilde{n}}(\lambda)\sum \limits_{i=1}^{t-1-h(t)} \Big[ M_{\tilde{n}}(\lambda)(1-G(d_t))\Big]^{i}.
\end{align*}
Simplifying the product in the third term of (\ref{rec_sg_main}),
\begin{align*}
    \prod_{i=h(t)}^{t-1}M_{\tilde{n}} (\lambda)(1-G(d_i))&\leq \prod_{i=h(t)}^{t-1}M_{\tilde{n}} (\lambda)(1-G(d_t))\\
    &=\Big[M_{\tilde{n}} (\lambda)(1-G(d_t))\Big]^{t-h(t)}.
\end{align*}
Using Claim~\ref{ordering_sg}, the third term in (\ref{rec_sg_main}) is now upper-bounded by $\exp{\Big(\frac{\lambda^2\sigma^2 h(t)}{2}\Big)} \Big[M_{\tilde{n}} (\lambda)(1-G(d_t))\Big]^{t-h(t)}$.
Putting all these simplified terms together in (\ref{rec_sg_main}), we get
\begin{align}
   \mathbb{E}_0[e^{\lambda Y(t)}|A_{t-1}]& \leq M_{\tilde{n}}(\lambda)\sum \limits_{i=0}^{t-1-h(t)} \Big[ M_{\tilde{n}}(\lambda)(1-G(d_t))\Big]^{i}\nonumber\\&+\exp{\Big(\frac{\lambda^2\sigma^2 h(t)}{2}\Big)} \Big[M_{\tilde{n}} (\lambda)(1-G(d_t))\Big]^{t-h(t)}.\label{c_rec_sg}
\end{align}
Since $\tilde{n}(t) \in \mathcal{SG}(\sigma^2)$,
$M_{\tilde{n}}(\lambda) \leq \exp{\Big(\frac{\lambda^2\sigma^2}{2}\Big)}$. We know that $\exp{(x)}\le \frac{1}{1-x}$ for all $x\ge0$. This means that $\exp{\Big(\frac{\lambda^2\sigma^2}{2}\Big)}\le \frac{1}{1-\frac{\lambda^2 \sigma^2}{2}}$. Define $\bar{\gamma}(\lambda)=\frac{1}{1-\frac{\lambda^2 \sigma^2}{2}}$.
Now, (\ref{c_rec_sg}) gives
\begin{align}
    \mathbb{E}_0[&e^{\lambda Y(t)}|A_{t-1}]\leq \bar{\gamma}(\lambda)\sum \limits_{i=0}^{t-1-h(t)} \Big[ \bar{\gamma}(\lambda)(1-G(d_t))\Big]^{i}\nonumber\\&+\exp{\Big(\frac{\lambda^2\sigma^2 h(t)}{2}\Big)} \Big[\bar{\gamma} (\lambda)(1-G(d_t))\Big]^{t-h(t)}.\label{rec_e}
\end{align}
Note that $Y(t)$ is symmetric about zero given $A_{t-1}$. By using Chernoff bound, 
\begin{align}
    \mathbb{P}_0(\lvert Y(t) \rvert \geq k|A_{t-1}) &= 2\mathbb{P}_0(Y(t) \geq k|A_{t-1})\nonumber\\  &\leq 2\mathbb{E}_0[e^{\lambda  Y(t) }|A_{t-1}]\exp{(-\lambda k)}.\nonumber
\end{align}
Therefore,
\begin{align}
    \mathbb{P}_0(\lvert Y(t) \rvert \geq k|A_{t-1}) &\leq 2\Big(\bar{\gamma}(\lambda)\sum \limits_{i=0}^{t-1-h(t)} \Big[ \bar{\gamma}(\lambda)(1-G(d_t))\Big]^{i}\nonumber\\+\exp{\Big(\frac{\lambda^2\sigma^2 h(t)}{2}\Big)} &\Big[\bar{\gamma} (\lambda)(1-G(d_t))\Big]^{t-h(t)}\Big) \exp{(-\lambda k)}. \label{c_bound_p_sg}
\end{align}
We choose an optimal $\lambda$ for a better bound in (\ref{c_bound_p_sg}) by setting $\bar{\gamma}( \lambda) (1-G(d_t)) \le 1$. This implies that $\frac{\lambda^2 \sigma^2}{2}\le G(d_t)$ yielding $\lambda \le \frac{\sqrt{2 G(d_t)}}{\sigma}$. We assume $\lambda = \frac{\sqrt{2 G(d_t)}}{\sigma}$. With this choice of $\lambda$, the RHS of (\ref{c_bound_p_sg}) is upper-bounded by
\begin{align*}
     2\Big[\frac{t-h(t)}{1-G(d_t)}+\exp{\Big(G(d_t)h(t)\Big)}\Big] \exp{\Big(- \frac{\sqrt{2 G(d_t)}}{\sigma}k\Big)}. 
\end{align*}
\end{proof}

\section{Proof of Lemma~\ref{complement_sg}}\label{section:complement_sg}
\begin{proof}[\unskip\nopunct]
We have $A_t=\bigcap\limits_{\tau=h(t)}^t \{\lvert Y(\tau) \rvert \leq d_\tau\}$ where $h(t)=t^\zeta, 0<\zeta<1$. Also, $d_\tau=D~\tau^{\frac{1}{2}+\beta'}$ for some $\beta'>0$ and $D> 0$.
\begin{align*}
    \mathbb{P}_0(A_{t-1}^\mathrm{C})&=\mathbb{P}_0\Big[\Big(\bigcap\limits_{\tau=h(t)}^{t-1} \{\lvert Y(\tau) \rvert \leq d_\tau\}\Big)^\mathrm{C}\Big]\\
    &\labelrel={demorgan}\mathbb{P}_0\Big[\bigcup\limits_{\tau=h(t)}^{t-1} \{\lvert Y(\tau) \rvert > d_\tau\}\Big]\\
    &\labelrel\leq{unionbound} \sum_{\tau=h(t)}^{t-1}\mathbb{P}_0(\lvert Y(\tau) \rvert > d_\tau)\\
    &\labelrel\leq{ordering}\sum_{\tau=h(t)}^{t-1}\mathbb{P}_0(\lvert Y'(\tau) \rvert > D~\tau^{\frac{1}{2}+\beta'})\\
    &\labelrel\leq{subGofY}\sum_{\tau=h(t)}^{t-1} 2\exp{(-c'\tau^{2\beta'})}\; \text{for some $c'>0$}\\
    &\leq\sum_{\tau=h(t)}^{t-1} 2\exp{(-c'h(t)^{2\beta'})}.
\end{align*}
Therefore, we have $\mathbb{P}_0(A_{t-1}^\mathrm{C})\leq 2(t-h(t)) \exp{(-c'h(t)^{2\beta'})}.$
    
\eqref{demorgan} and \eqref{unionbound} follow De Morgan's law and Boole's inequality (union bound) respectively. Recalling the characteristics of $\{Y'(t), t\ge 0\}$ as discussed in the proof of Claim~\ref{ordering_sg}, we have $\lvert Y(t) \rvert \le_{st} \lvert Y'(t) \rvert$ and $Y'(t) \in \mathcal{SG}(\sigma^2 t)$.  Hence, the inequalities \eqref{ordering} and \eqref{subGofY}.
\end{proof}

\section{Proof of Theorem~\ref{theorem:extendedstar}}
\label{proof:extendedstar}
\begin{proof}[\unskip\nopunct]
Recall the opinion update models of agent $1$ and its follower $f$. Hence, we obtain that $Y_{f1}(t+1)=$
\begin{align}\label{Y_ia}
    \begin{cases}
    \tilde{n}_{f1}(t)&\text{w.p.}\; (1-G(|Y(t)|))\tilde{G}(|Y_{f1}(t)|)\\
    Y_{f1}(t)+\tilde{n}_{f1}(t)&\text{w.p.}\; (1-G(|Y(t)|))(1-\tilde{G}(|Y_{f1}(t)|))\\
    \frac{Y(t)}{2}+\tilde{n}_{f1}(t)&\text{w.p.}\; G(|Y(t)|)\tilde{G}(|Y_{f1}(t)|)\\
    \frac{Y(t)}{2}+Y_{f1}(t)+\tilde{n}_{f1}(t)&\text{w.p.}\; G(|Y(t)|)(1-\tilde{G}(|Y_{f1}(t)|))
    \end{cases}
\end{align}

where $\tilde{n}(t)=n_1(t)-n_2(t), \tilde{n}_{f1}(t)=n_f(t)-n_1(t)\in[-D,D]$ and $t\ge 0$. Define an event $B_t=\bigcap\limits_{\tau=l(t)}^t \{|Y(\tau)|\le \bar{d}_\tau\}$ where $l(t)=t^\xi$ for $0<\xi<1$ and $\bar{d}_\tau=D~\tau^{\frac{1}{2}-\beta}$ for some $\beta>0$. Using $B_t$, we introduce a high probability bound on $Y(t)$ and adapt it in a similar proof technique used to prove Theorem~\ref{theorem_bounded}. Define $\tilde{k}=c~t^{\frac{1}{2}-\tilde{\beta}}$ for $c, \tilde{\beta}>0$. By law of total probability, 
\begin{align}\label{eq:extendedstar1}
\mathbb{P}_0(|Y_{f1}(t)|\ge \tilde{k})&\le \mathbb{P}_0(|Y_{f1}(t)|\ge \tilde{k}|B_{t-1})\mathbb{P}_0(B_{t-1})\nonumber\\
&+\mathbb{P}_0(|Y_{f1}(t)|\ge \tilde{k}|B_{t-1}^C)\mathbb{P}_0(B_{t-1}^C)\nonumber\\
&\le \mathbb{P}_0(|Y_{f1}(t)|\ge \tilde{k}|B_{t-1})+\mathbb{P}_0(B_{t-1}^C) 
\end{align}
The following lemmas are crucial to the proof of Theorem~\ref{theorem:extendedstar}. Detailed proof of Lemmas~\ref{exstar:complement} and \ref{exstar:condition} can be found in Appendix~\ref{lemma:exstar_comp} and \ref{lemma:exstar_condn} respectively.

\begin{lemma}
Assume $G(x)=\frac{G_0}{1+x^\alpha}$ for some $\alpha>0$ and $0<G_0\le1$. With $\tilde{G}(x)\gtrsim \Big(\frac{1}{x}\Big)^{\frac{2}{3}-\tilde{\delta}}$ for some $\tilde{\delta}>0$, 
\begin{align*}
    \mathbb{P}_0(|Y_{f1}(t)|&\ge \tilde{k}|B_{t-1})\\&\le 2\Big(\frac{t-l(t)}{1-\tilde{G}(\tilde{d}_t)}+\exp{\Big(\frac{5}{8}\lambda^2 D^2 l(t)\Big)}\Big)\exp{(-\lambda \tilde{k})}
\end{align*}
\label{exstar:complement}
where $\lambda=\frac{2\ln 2}{\bar{d}_t}$.
\end{lemma}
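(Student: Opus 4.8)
The plan is to mirror the proof of Lemma~\ref{conditional_P}, now for the follower difference $Y_{f1}$ with the event $B_{t-1}$ playing the role of $A_{t-1}$. Since $Y_{f1}(0)=0$, the noise is symmetric, and the constraints $\{|Y(\tau)|\le\bar d_\tau\}$ defining $B_{t-1}$ are symmetric, the conditional law of $Y_{f1}(t)$ given $B_{t-1}$ is symmetric about zero (the analogue of Claim~\ref{restriction1}: flipping all noise signs reverses both $\tilde n$ and $\tilde n_{f1}$, leaves the influence probabilities and $B_{t-1}$ invariant, and sends $Y_{f1}(t)\mapsto -Y_{f1}(t)$). Hence a Chernoff bound gives $\mathbb{P}_0(|Y_{f1}(t)|\ge\tilde k\mid B_{t-1})\le 2\,\mathbb{E}_0[e^{\lambda Y_{f1}(t)}\mid B_{t-1}]\,e^{-\lambda\tilde k}$, and the whole task reduces to bounding the conditional MGF; the stated choice $\lambda=\frac{2\ln 2}{\bar d_t}$ is exactly what makes the leader perturbation tractable, as explained below.

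Next I would derive the one-step recursion from \eqref{Y_ia}. Conditioning on $(Y(s),Y_{f1}(s))$ and summing the four transition cases, the conditional MGF factorizes as
\[
\mathbb{E}[e^{\lambda Y_{f1}(s+1)}\mid Y(s),Y_{f1}(s)]=M_{\tilde n_{f1}}(\lambda)\,L_s\,F_s,
\]
where $L_s=(1-G(|Y(s)|))+G(|Y(s)|)e^{\lambda Y(s)/2}$ is a \emph{leader-perturbation} factor and $F_s=\tilde G(|Y_{f1}(s)|)+(1-\tilde G(|Y_{f1}(s)|))e^{\lambda Y_{f1}(s)}$ is the \emph{follower-contraction} factor. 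The factor $F_s$ is handled exactly as $G$ was in Theorem~\ref{theorem_bounded}: on $B_{t-1}$ the accumulated leader jumps $\tfrac12 Y(\tau)$ are each $O(\tau^{1/2-\beta})$, so $|Y_{f1}(\tau)|\le\tilde d_\tau\le\tilde d_t=2Dt^{3/2}$ holds deterministically, and monotonicity of $\tilde G$ gives $F_s\le 1+(1-\tilde G(\tilde d_t))e^{\lambda Y_{f1}(s)}$. This is the source of the geometric factor $\frac{t-l(t)}{1-\tilde G(\tilde d_t)}$, via the same expansion and summation used to obtain \eqref{c_rec_sg}.

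The heart of the argument is the leader factor $L_s$, the essential new difficulty relative to the two-agent case, since the rival leader can push $Y_{f1}$ away from zero with the non-decaying probability $G(|Y(s)|)\le G_0$. I would control it on the late window $[l(t),t]$ using three ingredients: (i) on $B_{t-1}$, $|Y(s)|\le\bar d_s\le\bar d_t$, so with $\lambda=\frac{2\ln 2}{\bar d_t}$ we have $\tfrac{\lambda}{2}|Y(s)|\le\ln 2$; (ii) the leader process is symmetric about zero, so $e^{\lambda Y(s)/2}$ may be replaced by $\cosh(\lambda Y(s)/2)$, cancelling the first-order drift; and (iii) the elementary bound $\cosh(x)-1\le\frac58 x^2$, valid precisely for $|x|\le\ln 2$. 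Together these replace $L_s$ by $1+O(\lambda^2 Y(s)^2)$ and, for the unconstrained early window $[0,l(t))$, produce the tail term $\exp(\frac58\lambda^2 D^2 l(t))$, the analogue of the $\exp(\frac{\lambda^2\sigma^2 h(t)}{2})$ term in Lemma~\ref{conditional_P}; the constant $\tfrac58=\tfrac12\cosh(\ln 2)$ records the combined effect of the follower noise and the symmetrized leader jump.

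Finally I would assemble the pieces: expand the recursion from $l(t)$ to $t$, bound the early contribution by $\exp(\frac58\lambda^2 D^2 l(t))$ via a stochastic-domination estimate on $Y_{f1}(l(t))$ in the spirit of Claim~\ref{ordering_sg}, sum the geometric series on the late window under the choice of $\lambda$ that keeps the per-step ratio at most one, and substitute into the Chernoff bound together with $e^{-\lambda\tilde k}$. The step I expect to be the main obstacle is precisely showing that the leader-perturbation factors do not accumulate uncontrollably over the late window: the contraction $\tilde G(\tilde d_t)$ must dominate the per-step perturbation, which is exactly why the proposition assumes the stronger influence $\tilde G(x)\gtrsim(1/x)^{2/3-\tilde\delta}$ (far stronger than needed for mere stability) and why the envelope $\tilde d_t=2Dt^{3/2}$ and the scaling $\lambda\propto\bar d_t^{-1}$ are tuned so that $\tilde d_t^{2/3}\asymp t$ makes the contraction rate $\tilde G(\tilde d_t)\asymp t^{-1}$ comparable to the perturbation budget. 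Verifying this balance rigorously, rather than heuristically, is the delicate part of the proof.
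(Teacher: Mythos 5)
Your overall architecture matches the paper's: Chernoff bound with symmetry, the exact factorization of the one-step conditional MGF from \eqref{Y_ia} into a leader factor $L_s$ and a follower factor $F_s$, the deterministic envelope $|Y_{f1}(\tau)|\le \tilde d_t=2Dt^{3/2}$ on $B_{t-1}$ (the paper's Claim~\ref{star:maximum}), a geometric sum over the late window under the constraint that the per-step ratio stays at most one, and an early-window stochastic-domination bound in the spirit of Claim~\ref{ordering_sg}. The one place you genuinely diverge is the control of $L_s=1+G(|Y(s)|)\bigl(e^{\lambda Y(s)/2}-1\bigr)$, and that step as written has a gap. You propose to replace $e^{\lambda Y(s)/2}$ by $\cosh(\lambda Y(s)/2)$ by appealing to the symmetry of the leader process; but $L_s$ sits inside a joint expectation with the $Y_{f1}(s)$-dependent factor $F_s$, and $Y(s)=X_1-X_2$ and $Y_{f1}(s)=X_f-X_1$ are correlated through $n_1$, so flipping signs sends $(Y,Y_{f1})\mapsto(-Y,-Y_{f1})$ jointly and does not let you symmetrize the leader factor alone. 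The paper avoids this entirely with a \emph{deterministic} worst-case bound: Claim~\ref{star:convex} shows $(e^{\lambda x/2}-1)G(x)$ is monotonically increasing on $[0,\bar d_t]$ when $G(x)=\frac{G_0}{1+x^\alpha}$ with $\alpha\le 1$, so with $\lambda=\frac{2\ln 2}{\bar d_t}$ one gets $L_s\le 1+G(\bar d_t)=:x_t$ pointwise on $B_t$. This is precisely where the lemma's hypothesis on the functional form of $G$ enters; your argument never uses that hypothesis, which is a sign the step is not closed. (Your route is repairable deterministically, e.g.\ via $e^{u}-1\le 2u$ for $0\le u\le\ln 2$ together with $G(x)x\le G_0 x^{1-\alpha}$, but that again needs the assumed form of $G$.)

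A second, smaller inaccuracy: the constant $\tfrac58$ in $\exp\bigl(\tfrac58\lambda^2D^2 l(t)\bigr)$ does not come from $\cosh(x)-1\le\tfrac58x^2$ on the early window (where no bound $|\lambda Y/2|\le\ln 2$ is even available, since $B_{t-1}$ imposes no constraint for $\tau<l(t)$). It comes from the paper's Claim~\ref{claim:starorder}: on $[0,l(t))$ one dominates $Y_{f1}$ by the process with $G=1$, $\tilde G=0$, whose increments $\tilde n_{f1}(j)+\tfrac12\tilde n(j)$ are sub-Gaussian with parameter $D^2+\tfrac{D^2}{4}=\tfrac54 D^2$, giving the per-step MGF bound $\exp\bigl(\tfrac12\cdot\tfrac54\lambda^2D^2\bigr)=\exp\bigl(\tfrac58\lambda^2D^2\bigr)$. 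The coincidence $\tfrac12\cosh(\ln 2)=\tfrac58$ is just that — a coincidence. Your closing discussion of the balance $G(\bar d_t)\le\tilde G(\tilde d_t)$ forced by $\bar\gamma(\lambda)x_t(1-\tilde G(\tilde d_t))\le 1$, and of why $\tilde G(x)\gtrsim x^{-(2/3-\tilde\delta)}$ with $\tilde d_t\asymp t^{3/2}$ is needed, is correct and matches the paper.
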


\begin{lemma}
\label{exstar:condition}
Recall the event $B_t$. With $G(x)\gtrsim \frac{1}{x^{1-\delta}}$ for some $\delta>0$ and $c_1\le \frac{4-2G_0}{1-G_0}$, 
\begin{align*}
    \mathbb{P}_0(B_{t-1}^C)\le c_1t^2\exp{(-\theta l(t)^\epsilon)}
\end{align*}
for all $t$ such that $\frac{\sqrt{2G(Dt)}}{D}\bar{d}_t\ge \theta t^\epsilon$ where $\epsilon\le \frac{\delta}{2}-\beta$.

\end{lemma}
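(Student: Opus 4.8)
The plan is to reduce the bound on the failure probability $\mathbb{P}_0(B_{t-1}^{\mathrm{C}})$ to the two-agent estimate already in hand. The key observation is that $Y(\tau)$, the opinion difference between the two leaders, evolves exactly as the two-agent dynamics with influence function $G$ and bounded noise, so Theorem~\ref{theorem_bounded} and its simplified form Proposition~\ref{prop:boundedSimpleStatement} apply verbatim to each event $\{|Y(\tau)|>\bar{d}_\tau\}$. The whole argument is therefore a union bound over the window defining $B_{t-1}$, followed by a uniform control of the resulting terms.

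Concretely, I would first use De Morgan's law and Boole's inequality to write
\[
\mathbb{P}_0(B_{t-1}^{\mathrm{C}})=\mathbb{P}_0\Big(\bigcup_{\tau=l(t-1)}^{t-1}\{|Y(\tau)|>\bar{d}_\tau\}\Big)\le\sum_{\tau=l(t-1)}^{t-1}\mathbb{P}_0(|Y(\tau)|>\bar{d}_\tau),
\]
noting that the sum has at most $t$ terms. For each $\tau$ I would then invoke Proposition~\ref{prop:boundedSimpleStatement} with threshold $k=\bar{d}_\tau=D\tau^{1/2-\beta}$, i.e.\ with constant $c=D$. With this choice the admissibility condition $\sqrt{\tau^{1-2\beta-2\epsilon}G(D\tau)}\ge\theta D/(\sqrt{2}\,c)$ of that proposition collapses exactly to the lemma's hypothesis $\frac{\sqrt{2G(D\tau)}}{D}\bar{d}_\tau\ge\theta\tau^\epsilon$, and the proposition yields
\[
\mathbb{P}_0(|Y(\tau)|>\bar{d}_\tau)\le c_1\,\tau\,\exp(-\theta\tau^\epsilon),\qquad c_1\le\tfrac{4-2G_0}{1-G_0}.
\]
Here the assumption $G(x)\gtrsim x^{-(1-\delta)}$ together with $\epsilon\le\frac{\delta}{2}-\beta$ is precisely what guarantees that the Chernoff exponent $\sqrt{2G(D\tau)}\,\tau^{1/2-\beta}$ dominates $\theta\tau^\epsilon$, so the per-term bound holds with the same constant $\theta$ across the window.

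Finally, since $\tau\le t$ and $\tau^\epsilon\ge l(t)^\epsilon$ on the window, I would bound every summand by $c_1 t\exp(-\theta l(t)^\epsilon)$; multiplying by the at-most-$t$ terms gives the claimed $\mathbb{P}_0(B_{t-1}^{\mathrm{C}})\le c_1 t^2\exp(-\theta l(t)^\epsilon)$. I expect the main obstacle to be the \emph{uniformity} of this estimate over $\tau$: because the exponent $\sqrt{2G(D\tau)}\,\tau^{1/2-\beta-\epsilon}$ is nondecreasing under $\epsilon\le\frac{\delta}{2}-\beta$, the binding instance of the admissibility condition occurs at the smallest index $\tau\approx l(t)$ rather than at $\tau=t$. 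Hence the delicate point is to argue that a single $\theta$ (and the corresponding threshold on $t$) makes the per-term bound valid simultaneously for every $\tau$ in the window; since $l(t)=t^\xi\to\infty$, the quantity $\tau^{1-2\beta-2\epsilon}G(D\tau)$ stays bounded below along the window for all sufficiently large $t$, which is exactly the regime in which the lemma is invoked.
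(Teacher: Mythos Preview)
Your proposal is correct and follows essentially the same route as the paper: De Morgan plus union bound over $\tau\in[l(t),t-1]$, then Proposition~\ref{prop:boundedSimpleStatement} applied termwise with $k=\bar d_\tau$, followed by the uniform replacement $\tau\mapsto t$ in the prefactor and $\tau^\epsilon\mapsto l(t)^\epsilon$ in the exponent. If anything, you are more explicit than the paper about the uniformity of the admissibility condition across the window, which the paper simply asserts via the hypothesis at $t$.
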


Using the results from Lemmas~\ref{exstar:complement} and \ref{exstar:condition} into equation \eqref{eq:extendedstar1}, the result in Theorem~\ref{theorem:extendedstar} follows.
\end{proof}

\section{Proof of Lemma~\ref{exstar:complement}}
\label{lemma:exstar_comp}
\begin{proof}[\unskip\nopunct]
Let $M_{\tilde{n}}(\lambda)$ be the moment generating function of $\tilde{n}_{f1}(t)$. Recall that $B_t=\bigcap\limits_{\tau=l(t)}^t \{|Y(\tau)|\le \bar{d}_\tau\}$ where $l(t)=t^\xi$ for $0<\xi<1$, and $\bar{d}_\tau=D~\tau^{\frac{1}{2}-\beta}$ for some $\beta>0$ and $D>0$. We want to obtain an upper bound for the conditional probability $\mathbb{P}_0(|Y_{f1}(t)|\ge \tilde{k}|B_{t-1})$. By using Chernoff bound, for $\lambda>0$,
\begin{align}\label{chernoffstar}
    \mathbb{P}_0(|Y_{f1}(t)|\ge \tilde{k}|B_{t-1}) \le 2\mathbb{E}_0[e^{\lambda Y_{f1}(t)}|B_{t-1}]\exp{(-\lambda \tilde{k})}
\end{align}
We now focus on obtaining an upper bound for $\mathbb{E}_0[e^{\lambda Y_{f1}(t+1)}|B_{t}]$ that will lead us to an upper bound for $\mathbb{E}_0[e^{\lambda Y_{f1}(t)}|B_{t-1}]$ in (\ref{chernoffstar}).
\begin{align}
    \mathbb{E}_0[&e^{\lambda Y_{f1}(t+1)}|B_{t}]\nonumber\\&=M_{\tilde{n}}(\lambda)\mathbb{E}_0[(1-G(|Y(t)|))\tilde{G}(|Y_{f1}(t)|)\nonumber\\&\;+e^{\lambda Y_{f1}(t)}(1-G(|Y(t)|))(1-\tilde{G}(|Y_{f1}(t)|))\nonumber\\&\;+e^{\lambda \frac{Y(t)}{2}} G(|Y(t)|)\tilde{G}(|Y_{f1}(t)|)\nonumber\\
    &\;+e^{\lambda \frac{Y(t)}{2}}e^{\lambda Y_{f1}(t)} G(|Y(t)|)(1-\tilde{G}(|Y_{f1}(t)|))|B_{t}]\nonumber
\end{align}
\begin{align}    
    &\le M_{\tilde{n}}(\lambda)\mathbb{E}_0[(1-G(|Y(t)|))\nonumber\\&\;+e^{\lambda Y_{f1}(t)}(1-G(|Y(t)|))(1-\tilde{G}(|Y_{f1}(t)|))\nonumber\\&\;+e^{\lambda \frac{Y(t)}{2}} G(|Y(t)|)\nonumber\\&\;+e^{\lambda \frac{Y(t)}{2}}e^{\lambda Y_{f1}(t)} G(|Y(t)|)(1-\tilde{G}(|Y_{f1}(t)|))|B_{t}]\nonumber\\
    &\le M_{\tilde{n}}(\lambda)\mathbb{E}_0\Big[\Big(e^{\lambda Y_{f1}(t)} (1-\tilde{G}(|Y_{f1}(t)|))+1\Big)\nonumber\\
    &\quad\Big(1-G(|Y(t)|)+e^{\lambda \frac{Y(t)}{2}}G(|Y(t)|)\Big)|B_t\Big]\label{star:expectation}
\end{align}
In order to upper-bound (\ref{star:expectation}), we use the following Claims~\ref{star:maximum} and \ref{star:convex}.
\begin{claim}
\label{star:maximum}
Given $B_t$, $Y_{f1}(t)\le 2D~t^\frac{3}{2}\; \forall t\ge 0$ for $\xi\le\frac{3}{4}$. 
\end{claim}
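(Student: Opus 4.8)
The plan is to convert the probabilistic recursion \eqref{Y_ia} into a deterministic worst-case recursion and then exploit the two regimes of control on $|Y(\tau)|$ afforded by the event $B_t$. Since this claim is used only to bound $\mathbb{E}_0[e^{\lambda Y_{f1}(t)}|B_{t-1}]$ with $\lambda>0$ in Lemma~\ref{exstar:complement}, it suffices to track the largest value $Y_{f1}$ can attain. Among the four cases in \eqref{Y_ia}, the last one produces the maximal increment, so deterministically
\begin{align*}
    Y_{f1}(t+1) \le Y_{f1}(t) + \frac{|Y(t)|}{2} + \tilde{n}_{f1}(t) \le Y_{f1}(t) + \frac{|Y(t)|}{2} + D,
\end{align*}
using $\tilde{n}_{f1}(t) \le D$ (the reset cases only decrease $Y_{f1}$ and can be ignored for an upper bound).

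First I would unroll this inequality from the zero initial condition $Y_{f1}(0)=0$ to obtain
\begin{align*}
    Y_{f1}(t) \le \sum_{\tau=0}^{t-1}\Big(\frac{|Y(\tau)|}{2} + D\Big) = Dt + \frac{1}{2}\sum_{\tau=0}^{t-1}|Y(\tau)|,
\end{align*}
so that the whole problem reduces to bounding $\sum_{\tau=0}^{t-1}|Y(\tau)|$ on the event $B_t$.

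Next I would split the sum at $\tau=l(t)=t^\xi$. On the short initial window $0\le\tau<l(t)$ the event $B_t$ gives no control, so I would use the deterministic bound $|Y(\tau)|\le D\tau\le D\,l(t)$ valid for bounded noise exactly as in the proof of Theorem~\ref{theorem_bounded}; this contributes a term of order $l(t)^2=t^{2\xi}$. On the tail window $l(t)\le\tau\le t-1$, the definition of $B_t$ gives $|Y(\tau)|\le\bar{d}_\tau=D\,\tau^{\frac{1}{2}-\beta}$, and comparing the sum to $\int_0^t \tau^{\frac12-\beta}\,d\tau$ yields a contribution of order $t^{\frac{3}{2}-\beta}$, which is of strictly lower order than $t^{3/2}$. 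Collecting the two windows with the $Dt$ term gives
\begin{align*}
    Y_{f1}(t) \le Dt + \frac{D}{4}\,t^{2\xi} + \frac{D}{2}\cdot\frac{t^{\frac{3}{2}-\beta}}{\frac{3}{2}-\beta} \le 2D\,t^{\frac{3}{2}},
\end{align*}
where the last inequality uses $\xi\le\frac34$ (so that $t^{2\xi}\le t^{3/2}$) together with the generous constant $2$, which absorbs all three summands for every $t\ge0$ (the case $t=0$ is immediate from $Y_{f1}(0)=0$).

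The main obstacle is precisely the short initial window: because $B_t$ constrains $|Y(\tau)|$ only for $\tau\ge l(t)$, one is forced back onto the crude deterministic bound $|Y(\tau)|\le D\tau$ for $\tau<l(t)$, whose accumulated contribution scales like $l(t)^2=t^{2\xi}$. The exponent constraint $\xi\le\frac34$ is exactly what keeps this poorly-controlled early term inside the target envelope $t^{3/2}$; any larger $\xi$ would let the initial, unregulated fluctuations of $Y$ dominate and invalidate the claimed bound.
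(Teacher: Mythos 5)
Your proposal is correct and follows essentially the same route as the paper: bound the worst-case drift of $Y_{f1}$ by $\frac{|Y(\tau)|}{2}+D$ per step, split the accumulated sum at $\tau=l(t)$, use the crude deterministic bound $|Y(\tau)|\le D\tau$ on the uncontrolled initial window and the $B_t$-bound $\bar{d}_\tau$ on the tail, and invoke $\xi\le\frac{3}{4}$ to keep the $t^{2\xi}$ term inside the $t^{3/2}$ envelope. The only cosmetic difference is that the paper bounds the tail sum $\sum_{i=l(t)}^{t-1}\sqrt{i}$ via Cauchy--Schwarz while you use an integral comparison; both yield the same order.
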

\begin{proof}[Proof of Claim~\ref{star:maximum}]
Recall the event $B_t=\bigcap\limits_{\tau=l(t)}^t \{|Y(\tau)|\le \bar{d}_\tau\}$ where $l(t)=t^\xi$ for $0<\xi<1$.
Given $B_{t}$, $Y(\tau)\le D\sqrt{\tau}$ for $\tau\ge l(t)$ and $Y(\tau)\le D\tau$ for $0\le\tau<l(t)$. Since $Y_{f1}(0)=Y(0)=0$ and $n_{f1}(t)\in[-D,D]\;\forall t\ge 0$, $Y_{f1}(t)\le Dt+\frac{D}{2}\Big(\sum\limits_{i=0}^{l(t)-1}{i}+\sum\limits_{i=l(t)}^{t-1}\sqrt{i}\Big)\; \forall t\ge0$. That is, the maximum drift of $Y_{f1}(t)$ (from (\ref{Y_ia})). By using Cauchy-Schwarz inequality on the third term, we obtain that $Y_{f1}(t)\le Dt+\frac{D}{2}\Big(\frac{l(t)(l(t)-1)}{2}+\frac{(t^3-t^2)^\frac{1}{2}}{\sqrt{2}}-\frac{\Big(l(t)(l(t)-1)\Big)^\frac{1}{2}}{\sqrt{2}}\Big)\le Dt+\frac{D}{2}\Big(\frac{l(t)^2}{2}+\frac{t^\frac{3}{2}}{\sqrt{2}}\Big)\le 2D~t^\frac{3}{2}$ for $\xi\le\frac{3}{4}$.
\end{proof}

\begin{claim}
\label{star:convex}
Recall our assumption that $G(x)=\frac{G_0}{1+x^\alpha}$ for some $\alpha>0$ and $0<G_0\le1$. Given $B_t$, $G(|Y(t)|)(e^{\lambda \frac{Y(t)}{2}}-1)$ is monotonically increasing in $Y(t)$. 
\end{claim}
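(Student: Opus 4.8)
The plan is to treat the expression as a deterministic function of the real variable $y=Y(t)$ and show it is non-decreasing on the interval to which $B_t$ confines $Y(t)$. Define $\phi(y):=G(|y|)(e^{\lambda y/2}-1)=\frac{G_0}{1+|y|^{\alpha}}(e^{\lambda y/2}-1)$. On $B_t$ we have $|Y(t)|\le\bar d_t$, and the calibrated choice $\lambda=\frac{2\ln2}{\bar d_t}$ is exactly what makes $\tfrac{\lambda}{2}|y|\le\ln2$, hence $\tfrac12\le e^{\lambda y/2}\le2$, throughout $[-\bar d_t,\bar d_t]$. Since $\phi$ is continuous (in particular $\phi(0)=0$, and when $\alpha<1$ the apparent singularity of $|y|^{\alpha-1}$ at the origin is absorbed by the factor $e^{\lambda y/2}-1$, which vanishes there), it suffices to verify $\phi'(y)\ge0$ separately on $(0,\bar d_t)$ and on $(-\bar d_t,0)$ and then invoke continuity at the origin.

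For $y>0$ one has $\phi'(y)=G_0\Big[-\tfrac{\alpha y^{\alpha-1}}{(1+y^{\alpha})^{2}}(e^{\lambda y/2}-1)+\tfrac{1}{1+y^{\alpha}}\tfrac{\lambda}{2}e^{\lambda y/2}\Big]$, a sum of a negative and a positive term; I would clear the positive denominators and apply the elementary bound $e^{u}-1\le u e^{u}$ (i.e.\ $e^{\lambda y/2}-1\le\tfrac{\lambda y}{2}e^{\lambda y/2}$) to reduce the target inequality to $1+y^{\alpha}\ge\alpha y^{\alpha}$, which holds for every $y$ as soon as $\alpha\le1$ (and here $\alpha=1-\delta<1$). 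Thus the $y>0$ branch is routine and needs only $\alpha<1$.

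The hard part is $y<0$, where $G(|y|)$ is \emph{increasing} while $e^{\lambda y/2}-1<0$, so the two terms of $\phi'$ swap signs and the previous cancellation is lost. Writing $y=-z$ with $z\in(0,\bar d_t]$ and $w:=\tfrac{\lambda z}{2}\in(0,\ln2]$, the requirement $\phi'\ge0$ reads $(1+z^{\alpha})\tfrac{\lambda}{2}e^{-w}\ge\alpha z^{\alpha-1}(1-e^{-w})$; multiplying through by $\tfrac{2}{\lambda}e^{w}=\tfrac{z}{w}e^{w}$ turns this into the equivalent form $z^{-\alpha}+1\ge\alpha\,\tfrac{e^{w}-1}{w}$. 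Since $z^{-\alpha}>0$, it is enough that $\alpha\le\tfrac{w}{e^{w}-1}$, and $w\mapsto\tfrac{w}{e^{w}-1}$ is decreasing on $(0,\ln2]$ with minimum $\ln2$ attained at $w=\ln2$, i.e.\ at the endpoint $z=\bar d_t$. I expect this endpoint — where the range constraint from $B_t$ pushes $e^{\lambda y/2}$ down to its extremal value $\tfrac12$ — to be precisely the binding case and hence the main obstacle: it is the only place where the bound $|Y(t)|\le\bar d_t$ and the calibration $\lambda=\tfrac{2\ln2}{\bar d_t}$ are genuinely needed, and it is what must be checked against the strength of $G$ (the argument goes through in the regime $\alpha\le\ln2$, comfortably covering the simulated $\delta=0.5$). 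Combining the two branches with continuity at $0$ then yields that $\phi$ is non-decreasing on $[-\bar d_t,\bar d_t]$, which is the assertion of the claim.
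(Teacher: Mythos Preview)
Your positive branch is exactly the paper's argument: write $h(y)=G(y)(e^{\lambda y/2}-1)$ on $[0,\bar d_t]$, differentiate, and use $e^{u}-1\le ue^{u}$ to reduce to $1+y^{\alpha}\ge\alpha y^{\alpha}$, which holds for all $\alpha\le1$. So on $y\ge0$ you and the paper coincide.

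The difference is in how the negative branch is handled. The paper does not prove monotonicity for $y<0$ at all: it simply observes that the claim is only used to upper–bound $1+G(|Y(t)|)(e^{\lambda Y(t)/2}-1)$, and since $e^{\lambda y/2}-1<0$ for $y<0$ while $G>0$, the expression is negative there and hence the maximum over $[-\bar d_t,\bar d_t]$ is attained on $[0,\bar d_t]$. Monotonicity on the nonnegative half then gives the maximum at $y=\bar d_t$, which is all the downstream bound needs, and this requires only $\alpha\le1$. Your analysis of $y<0$ is more honest to the claim as literally stated and is correct: you show that genuine monotonicity on $[-\bar d_t,0]$ forces $\alpha\le\ln2$ (tight at the endpoint $z=\bar d_t$, $w=\ln2$), so for $\alpha\in(\ln2,1)$ the function is \emph{not} increasing on the full interval when $\bar d_t$ is large. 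In other words, the paper's stated claim is slightly stronger than what it proves and needs; your route exposes this, at the cost of the extra hypothesis $\alpha\le\ln2$, while the paper's shortcut (restrict to $y\ge0$ because that is where the maximum lives) is what makes $\alpha\le1$ suffice for the application.
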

\begin{proof}[Proof of Claim~\ref{star:convex}]
Given $B_t$, assume that $\lambda\propto \frac{1}{\max\limits_{0\le\tau\le t}Y(\tau)}$. Fix $\lambda=\frac{2\ln{2}}{\bar{d}_t}$. Define $h(x)=(e^{\lambda \frac{x}{2}}-1) G(x)$ for $0\le x\le \bar{d}_t$ for all $t\ge 0$. Since we want to maximize $h(x)$, we restrict the domain to non-negative real numbers. Applying the first derivative on $h(x)$, we obtain that $h(.)$ is monotonically increasing in the domain for $\alpha\le 1$. This implies that $h(x)$ is maximized only at the extreme point $x=\bar{d}_t$.  
\end{proof}
\noindent 
Recall that  $\tilde{d}_t=2D~t^{\frac{3}{2}}$ for $D>0$. We now have
\begin{align*}
    \mathbb{E}_0[e^{\lambda Y_{f1}(t+1)}|B_{t}]&
    \le M_{\tilde{n}}(\lambda)\mathbb{E}_0\Big[\Big(e^{\lambda Y_{f1}(t)} (1-\tilde{G}(\tilde{d}_t))+1\Big)\nonumber\\
    &\quad\Big(1-G(\bar{d}_t)+e^{\lambda \frac{\bar{d}_t}{2}}G(\bar{d}_t)\Big)|B_t\Big]
\end{align*}
Similar to the discussion in the proof of Claim~\ref{restriction1}, by virtue of symmetric constraints and symmetric noise  models, the recursion expression reduces to the following only in terms of $G(\cdot)$, $\tilde{G}(\cdot)$ and $M_{\tilde{n}}(\lambda)$: $\mathbb{E}_0[e^{\lambda Y_{f1}(t+1)}|B_{t}]$
\begin{align*}
    \le M_{\tilde{n}}(\lambda)x_t\Big(\mathbb{E}_0\Big[e^{\lambda Y_{f1}(t)}|B_{t-1}\Big](1-\tilde{G}(\tilde{d}_t))+1\Big)
\end{align*}
which upon expansion leads us to an upper bound for $\mathbb{E}_0[e^{\lambda Y_{f1}(t)}|B_{t-1}]$ as follows.
\begin{align}
    \mathbb{E}_0&[e^{\lambda Y_{f1}(t)}|B_{t-1}]\le M_{\tilde{n}}(\lambda)x_{t-1}\nonumber\\
    &+\sum\limits_{i=l(t)}^{t-2}\Big[\prod\limits_{n=0}^{t-i-1}M_{\tilde{n}}(\lambda)x_{t-n-1}\Big]\prod\limits_{j=0}^{t-2-i}(1-\tilde{G}(\tilde{d}_{t-1-j}))\nonumber
\end{align}
\begin{align}\label{rec:star}
+\Big[\prod\limits_{j=l(t)}^{t-1}M_{\tilde{n}}(\lambda)x_j(1-\tilde{G}(\tilde{d}_j))\Big]\mathbb{E}_0[e^{\lambda Y_{f1}(l(t))}|B_{l(t)}]
\end{align}
where $x_t=1-G(\bar{d}_t)+e^{\lambda \frac{\bar{d}_t}{2}}G(\bar{d}_t)$. The recursive expression (\ref{rec:star}) can further be simplified as follows. Recall the choice of $G(\cdot)$. For the chosen functional form of $G(\cdot)$, $x_t$ is monotonically non-decreasing in $t$. Hence, the second term can be upper-bounded by
$\sum\limits_{i=l(t)}^{t-2}(M_{\tilde{n}}(\lambda)x_{t-1})^{t-i}(1-\tilde{G}(\tilde{d}_{t}))^{t-i-1}$. The simplified expression is
\begin{align*}
    \mathbb{E}_0&[e^{\lambda Y_{f1}(t)}|B_{t-1}]\le M_{\tilde{n}}(\lambda)x_{t-1}\nonumber\\
    &+M_{\tilde{n}(\lambda)}x_{t-1}\sum\limits_{i=l(t)}^{t-2}\Big(M_{\tilde{n}}(\lambda)x_{t-1}(1-\tilde{G}(\tilde{d}_{t}))\Big)^{t-i-1}\nonumber\\
    &+\Big(M_{\tilde{n}}(\lambda)x_{t-1}(1-\tilde{G}(\tilde{d}_t))\Big)^{t-l(t)}\mathbb{E}_0[e^{\lambda Y_{f1}(l(t))}|B_{l(t)}].
\end{align*}

We now use the following claim to upper-bound $\mathbb{E}_0[e^{\lambda Y_{f1}(l(t))}|B_{l(t)}]$ in the third term. The proof technique of Claim~\ref{claim:starorder} follows a similar argument used to prove Claim~\ref{ordering_sg}.

\begin{claim}\label{claim:starorder}
Recall the opinion difference process $\{Y_{f1}(t)\}_{t\ge 0}$ and $l(t)=t^\xi$ for $0<\xi<1$. Then,
\begin{align*}
    \mathbb{E}_0[e^{\lambda Y_{f1}(l(t))}|B_{l(t)}]\le\exp{\Big(\frac{5\lambda^2 D^2 l(t)}{8}\Big)}.
\end{align*}
\end{claim}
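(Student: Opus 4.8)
The plan is to mirror the stochastic-domination argument of Claim~\ref{ordering_sg}, adapted to the follower dynamics \eqref{Y_ia}, where the increment of $Y_{f1}$ carries, in addition to the i.i.d.\ noise $\tilde n_{f1}(\tau)\in[-D,D]$, a perturbation $\tfrac{Y(\tau)}{2}$ from the rival leader. First I would reduce to the upper tail. Since $Y_{f1}(0)=Y(0)=0$, the noise is symmetric, and the event $B_{l(t)}$ imposes a symmetric constraint on the leader difference, the conditional law of $Y_{f1}(l(t))$ given $B_{l(t)}$ is symmetric about zero (the same observation used in Claim~\ref{restriction1}). Hence it suffices to control $\mathbb{E}_0[e^{\lambda Y_{f1}(l(t))}\mid B_{l(t)}]$ for $\lambda>0$, and the symmetry simultaneously forces the perturbation increments to have zero conditional mean, so no linear-in-$\lambda$ drift term survives.

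Next I would factorize the one-step conditional moment generating function. Iterating the inequality already obtained in \eqref{star:expectation} and dropping the factor $(1-\tilde G)\le 1$ in the follower term, the recursion splits into a noise part and a perturbation part; equivalently, one may construct a dominating process $\hat Y_{f1}(l(t))=\sum_{\tau=0}^{l(t)-1}\bigl(\tfrac{Y(\tau)}{2}B_\tau+\tilde n_{f1}(\tau)\bigr)$, where $B_\tau$ is the Bernoulli influence indicator with parameter $G(|Y(\tau)|)$, obtained by discarding the two ``reset'' branches of \eqref{Y_ia} and replacing each by its additive counterpart (which dominates on the positive half-line). This is the exact analogue of passing from $Y(t)$ to the driftless sum $Y'(t)$ in Claim~\ref{ordering_sg}.

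The third step bounds the two parts per step and multiplies over the $l(t)$ steps. For the noise part I would use Hoeffding's Lemma, $M_{\tilde n}(\lambda)\le \exp(\lambda^2 D^2/2)$, contributing $\exp(\lambda^2 D^2 l(t)/2)$. For the perturbation part I would invoke Claim~\ref{star:convex}: on $B_{l(t)}$ we have $|Y(\tau)|\le \bar d_\tau\le \bar d_t$ over the window, and $G(|Y(\tau)|)\bigl(e^{\lambda Y(\tau)/2}-1\bigr)$ is monotone, so the per-step perturbation factor is maximized at $Y(\tau)=\bar d_t$; the calibration $\lambda=\tfrac{2\ln 2}{\bar d_t}$ makes $e^{\lambda \bar d_t/2}=2$, and, after centering via the symmetry of the first step, pins the per-step perturbation contribution to a Hoeffding factor $\exp(\lambda^2 D^2/8)$. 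Combining the two parts yields $\exp\bigl((\tfrac12+\tfrac18)\lambda^2 D^2 l(t)\bigr)=\exp(5\lambda^2 D^2 l(t)/8)$, the claimed bound.

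The hard part is the perturbation $\tfrac{Y(\tau)}{2}B_\tau$: unlike the clean i.i.d.\ noise of Claim~\ref{ordering_sg}, it is neither independent across $\tau$ nor bounded a~priori, since it couples the follower to the random, growing leader difference $Y(\tau)$. The two ingredients that tame it are (i) the symmetry of the conditioned law, which annihilates its conditional mean so only a quadratic contribution remains, and (ii) the restriction to $B_{l(t)}$ together with the calibrated $\lambda$, which keep $e^{\lambda Y(\tau)/2}$ uniformly bounded on the window so that the per-step factors telescope into the stated constant. The delicate points to verify are the stochastic domination despite the sign-dependent reset branches, and that the few early steps $\tau<l\bigl(l(t)\bigr)$, where $B_{l(t)}$ gives no direct control on $Y(\tau)$, contribute negligibly under the crude bound $|Y(\tau)|\le D\tau$ and the chosen scaling of $\xi$.
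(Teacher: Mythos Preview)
Your proposal has a genuine gap in the third step, where you assert that the per-step perturbation contributes a Hoeffding factor $\exp(\lambda^2 D^2/8)$. This is where the argument breaks. On the event $B_{l(t)}$ (and on the window you actually need, $\tau\in[0,l(t)-1]$), the increment $\tfrac{Y(\tau)}{2}B_\tau$ is bounded in absolute value by $\bar d_\tau/2\le \bar d_t/2$, \emph{not} by $D/2$. Hoeffding on a centered variable with half-width $\bar d_t/2$ gives $\exp(\lambda^2 \bar d_t^{\,2}/8)$, and with the calibration $\lambda=\tfrac{2\ln 2}{\bar d_t}$ this is $\exp((\ln 2)^2/2)$ --- a fixed constant strictly larger than $1$ at every step. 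Multiplying over $l(t)$ steps yields $\exp(c\, l(t))$ for some $c>0$, whereas the target $\exp(5\lambda^2 D^2 l(t)/8)$ equals $\exp\bigl(\tfrac{5(\ln2)^2}{2}\,l(t)/t^{1-2\beta}\bigr)$, which tends to $1$ because $\xi<1-2\beta$. Your route therefore overshoots the claim by a factor that is exponentially large in $l(t)$; neither the symmetry/centering device nor the early-step allowance $\tau<l(l(t))$ can recover this, because the problem is the \emph{scale} of the perturbation, not its sign or the number of uncontrolled steps.

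The paper's proof avoids this by a different domination: instead of keeping the true leader process $Y(\tau)$ and relying on $B_{l(t)}$ to bound it, it passes to the auxiliary dynamics with $G\equiv 1$ (and $\tilde G\equiv 0$). Under $G\equiv 1$ the leaders merge at every step, so the leader difference collapses to a single noise increment, $Y''(\tau)=\tilde n(\tau-1)\in[-D,D]$, uniformly in $\tau$. The dominating follower process is then the explicit i.i.d.\ sum $Y'_{f1}(m)=\sum_{j}\bigl(\tilde n_{f1}(j)+\tfrac{\tilde n(j)}{2}\bigr)$, the conditioning on $B_{l(t)}$ is simply dropped, and Hoeffding on $\tilde n_{f1}(j)\in[-D,D]$ and $\tfrac{\tilde n(j)}{2}\in[-D/2,D/2]$ produces precisely the $\tfrac12+\tfrac18=\tfrac58$ coefficient. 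The crucial point you are missing is that the $D^2$ in the exponent comes from bounding the perturbation by one step of noise, not by the accumulated leader difference $\bar d_\tau$; that reduction is achieved by the $G\equiv 1$ coupling, not by conditioning on $B_{l(t)}$.
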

\begin{proof}[Proof of Claim~\ref{claim:starorder}]
We use the notion of stochastic ordering to prove Claim~\ref{claim:starorder}. Recall Definition~\ref{s_ordering}.
Let $\{Y''(t)\}_{t\ge0}$ denote the process of opinion difference between the leaders when $G=1$.
Correspondingly, let $\{Y'_{f1}(t)\}_{t\ge0}$ denote the process of opinion difference between agent $1$ and its follower $f$ with $\tilde{G}=0$. That is, at time $t$ for $\tilde{G}=0$ and $G=1$,
\begin{align}\label{yiadash}
    Y''(t+1)=\tilde{n}(t)\;
    \text{and}\; Y'_{f1}(t+1)=Y'_{f1}(t)+\frac{Y''(t)}{2}+\tilde{n}_{f1}(t).
\end{align}
It is clear from (\ref{yiadash}) that $Y_{f1}(t)\le_{st} Y'_{f1}(t)$ for all $t$. As exponential function is non-decreasing in its argument for $\lambda>0$, we have
\begin{align*}
    \mathbb{E}_0[\exp{(\lambda Y_{f1}(t))}]\leq \mathbb{E}_0[\exp{(\lambda Y'_{f1}(t))}]\; \forall t
\end{align*}
from Definition~\ref{s_ordering}. Hence, 
\begin{align*}
    \mathbb{E}_0[\exp{(\lambda Y_{f1}(l(t)))}|B_{l(t)}]\leq \mathbb{E}_0[\exp{(\lambda Y'_{f1}(l(t)))}|B_{l(t)}].
\end{align*}
Since $Y_{f1}(0)=Y(0)=0$, $Y'_{f1}(l(t))=\tilde{n}_{f1}(l(t)-1)+\sum\limits_{j=0}^{l(t)-2}\tilde{n}_{f1}(j)+\frac{\tilde{n}(j)}{2}$. We recall that $\{\tilde{n}(t)\}_{t\ge 0}$ and $\{\tilde{n}_{f1}(t)\}_{t\ge 0}$ are i.i.d. Therefore, we have $\mathbb{E}_0[\exp{(\lambda Y'_{f1}(l(t)))}|B_{l(t)}]$
\begin{align*}
    &\le \mathbb{E}_0\Big[\exp{\Big(\lambda\Big( \sum\limits_{j=0}^{l(t)-1}\tilde{n}_{f1}(j)+\frac{\tilde{n}(j)}{2}\Big)\Big)}\Big]
\end{align*}
\begin{align*}   
    &\le\prod\limits_{j=0}^{l(t)-1}\mathbb{E}_0\Big[\exp{\Big(\lambda\Big( \tilde{n}_{f1}(j)+\frac{\tilde{n}(j)}{2}\Big)\Big)}\Big]\\
    &\le \exp{\Big(\frac{5\lambda^2 D^2 l(t)}{8}\Big)}
\end{align*}
following the properties of independent sub-Gaussian random variables.
\end{proof}
By Hoeffding's Lemma \cite[Sec. 2.3]{lugosi}, 
\begin{align*}
    M_{\tilde{n}}(\lambda)\le \exp{\Big(\frac{\lambda^2D^2}{2}\Big)}\le \frac{1}{1-\frac{\lambda^2 D^2}{2}}.
\end{align*}
Let $\bar{\gamma}(\lambda)=\frac{1}{1-\frac{\lambda^2 D^2}{2}}$. Now, along with Claim~\ref{claim:starorder}, we have
\begin{align}\label{rec:star1}
    \mathbb{E}_0&[e^{\lambda Y_{f1}(t)}|B_{t-1}]\le \bar{\gamma}(\lambda)x_{t}\sum\limits_{i=0}^{t-1-l(t)}\Big(\bar{\gamma}(\lambda)x_{t}(1-\tilde{G}(\tilde{d}_{t}))\Big)^{i}\nonumber\\
    &+\Big(\bar{\gamma}(\lambda)x_{t}(1-\tilde{G}(\tilde{d}_t))\Big)^{t-l(t)} \exp{\Big(\frac{5\lambda^2 D^2 l(t)}{8}\Big)}.
\end{align}
By choosing $\lambda=\frac{2\ln{2}}{\bar{d}_t}$, we obtain that $x_t=1+G(\bar{d}_t)$.
From (\ref{chernoffstar}) and (\ref{rec:star1}), we observe that as $\lambda$ increases, the RHS of (\ref{rec:star1}) increases. However, $\exp(-\lambda \tilde{k})$ in (\ref{chernoffstar}) decreases with $\lambda$. 
One way to obtain a useful bound is such that $\bar{\gamma}(\lambda)(1+G(\bar{d}_t))(1-\tilde{G}(\tilde{d}_t))\le 1$. This yields that $G(\bar{d}_t)\le \tilde{G}(\tilde{d}_t)$ for all $t$ and $\lambda\le\frac{\sqrt{2}\tilde{G}(\tilde{d}_t)}{D}$. From this condition on $G(\cdot)$, we choose $\lambda=\min\limits_t\Big(\frac{2\ln{2}}{\bar{d}_t}, \frac{\sqrt{2}\tilde{G}(\tilde{d}_t)}{D} \Big)=\frac{2\ln{2}}{\bar{d}_t}$ for $\alpha\le 1$. With the above considerations, we obtain that
\begin{align*}
    \mathbb{E}_0[e^{\lambda Y_{f1}(t)}|B_{t-1}]&\le \frac{t-l(t)}{1-\tilde{G}(\tilde{d}_t)}+\exp{\Big(\frac{5}{8}\lambda^2 D^2 l(t)\Big)}.
\end{align*}
The final concentration bound follows the result in Lemma~\ref{exstar:complement}.

\end{proof}

\section{Proof of Lemma~\ref{exstar:condition}}
\label{lemma:exstar_condn}
\begin{proof}[\unskip\nopunct]
The proof technique here uses union bound and  De Morgan's law, and follows similar arguments used while proving Lemma~\ref{complement_sg}.
Recall the event $B_t=\bigcap\limits_{\tau=l(t)}^t \{|Y(\tau)|\le \bar{d}_\tau\}$ where $l(t)=t^\xi$ for $0<\xi<1$ and $\bar{d}_\tau=D~\tau^{\frac{1}{2}-\beta}$ for some $\beta>0$. We have
\begin{align*}
    \mathbb{P}_0(B_{t-1}^C)&=\mathbb{P}_0\Big(\Big[\bigcap\limits_{\tau=l(t)}^{t-1} \{|Y(\tau)|\le \bar{d}_\tau\}\Big]^C\Big)\\
    &\labelrel={Demorgan1}\mathbb{P}_0\Big(\bigcup\limits_{\tau=l(t)}^{t-1} \{|Y(\tau)|> \bar{d}_\tau\}\Big)\\
    &\labelrel\le{union1} \sum_{\tau=l(t)}^{t-1}\mathbb{P}_0(\lvert Y(\tau) \rvert > \bar{d}_\tau)\\
    &\labelrel\le{prevbound1} \sum_{\tau=l(t)}^{t-1} c_1 \tau \exp{(-\theta\tau^{\epsilon})}\\
    &\le \sum_{\tau=l(t)}^{t-1} c_1\tau \exp{(-\theta l(t)^{\epsilon})}\\
    &\le c_1t^2\exp{(-\theta l(t)^\epsilon)}.
\end{align*}

This implies $\mathbb{P}_0(B_{t-1}^C)\le c_1t^2\exp{(-\theta l(t)^\epsilon)}$. \eqref{Demorgan1} and \eqref{union1} follow De Morgan's law and union bound respectively. Inequality~\eqref{prevbound1} follows Proposition~\ref{prop:boundedSimpleStatement} for $G(x)\gtrsim \frac{1}{x^{1-\delta}}$ for some $\delta>0$. 
\end{proof}

\bibliographystyle{IEEEtran}
\bibliography{references}

\end{document}